\newenvironment{keywords}{Keywords: }{\newline}
\newtheorem{definition}{Definition}
\newtheorem{theorem}{Theorem}
\newtheorem{corollary}{Corollary}
\newcommand{\email}[1]{}
\newcommand{\doi}[1]{}
\renewcommand{\date}[1]{}
\newcommand{\pagerange}[1]{}
\newcommand{\volume}[1]{}
\newcommand{\pubyear}[1]{}
\newcommand{\artmonth}[1]{}
\newcommand{\backmatter}{}
\def\FDA{FDA}
\def\SRV{SRV}
\def\REML{REML}
\def\im{{\mathbf{i}}}
\def\ind{1}
\def\Supplement{Supplement}
\def\bSig\mathbf{\Sigma}
\newtheorem{proposition}{Proposition}
\title
{Elastic Full Procrustes Analysis of Plane Curves via Hermitian Covariance Smoothing}
\author
{Almond St\"ocker$^{1,2,*}$\email{almond.stoecker@epfl.ch}, 
Manuel Pfeuffer$^{1}$, 
Lisa Steyer$^{1}$, 
and Sonja Greven$^{1}$\\
$^{1}$Chair of Statistics, School of Business and Economics, Humboldt-Universit{\"a}t zu Berlin,\\ Unter den Linden 6, 10099 Berlin, Germany \\
$^{2}$Department of Mathematics, École polytechnique fédérale de Lausanne (EPFL),\\ Station 8, CH-1015 Lausanne, Switzerland}
\begin{document}


\date{{\it Received November} 2022. {\it Revised XX} XX.  {\it Accepted March} XX.}



\pagerange{\pageref{firstpage}--\pageref{lastpage}} 
\volume{XX}
\pubyear{XX}
\artmonth{XXXX}


\doi{10.1111/j.1541-0420.2005.00454.x}


\label{firstpage}

\maketitle

\begin{abstract}
	Determining the mean shape of a collection of curves is not a trivial task, in particular when curves are only irregularly/sparsely sampled at discrete points. 
	We newly propose an elastic full Procrustes mean of shapes of (oriented) plane curves, which are considered equivalence classes of parameterized curves 
	with respect to translation, rotation, scale, and re-parameterization (warping), based on the square-root-velocity (\SRV) framework.
	Identifying the real plane with the complex numbers, we establish a connection to covariance estimation in irregular/sparse functional data analysis. 
    We introduce Hermitian covariance smoothing and show how to employ this extension of existing covariance estimation methods for obtaining an estimator of the (in)elastic full Procrustes mean, also in the sparse case not yet covered by existing (intrinsic) elastic shape means.
    For this, we provide different groundwork results which are also of independent interest:
    we characterize (the decomposition of) the covariance structure of rotation-invariant bivariate stochastic processes using complex representations, and we identify sampling schemes that allow for exact observation of derivatives/\SRV\ transforms of sparsely sampled curves.
    We demonstrate the performance of the approach 
	in a phonetic study on tongue shapes and in different realistic simulation settings, inter alia based on handwriting data.
\end{abstract}

%

\begin{keywords}
Complex Gaussian process; Functional data; Phonetic tongue shape; Principal component analysis; Shape analyis; Square-root-velocity.
\end{keywords}




%

\section{Introduction}
\label{sec:intro}

When comparing the shape of, say, a specific outline marked on medical images across different patients, the concrete coordinate system used for recording is often arbitrary and not of interest: 
the shape neither depends on positioning in space, nor on orientation or size. Analogously, the outline can be mathematically represented via a parameterized curve $\beta: [0,1] \rightarrow \mathbb{R}^2$, but the particular parameterization of the outline curve is often not of interest, only its image. We study datasets where an observational unit is the shape of a plane curve, 
defined as equivalence class a) over  the shape invariances translation, rotation and scale and b) over re-parameterization.
More specifically, we generalize the notion of a full Procrustes mean from discrete landmark shape analysis \citep{DrydenMardia2016ShapeAnalysisWithApplications} to elastic shape analysis of curves, in particular to achieve improved estimation properties in irregular/sparsely measured scenarios compared to existing 
``intrinsic'' elastic mean shape estimation methods relying on  geodesic distances.  
To allow this generalization of landmark shape means to curves, we also present two results characterizing the covariance structure of rotation-invariant bivariate stochastic processes via their complex representations. To enable derivative-based elastic analysis of sparsely/irregularly sampled curves, which are common in practice but for which existing methods have problems, we provide a result on the feasibility to exactly observe the necessary derivatives under such sampling. 
While these results are important building blocks in preparing the proposed elastic Full Procrustes mean estimation, they are also of independent interest in their own right.

For landmark shapes, 
different notions of mean shape are well-established including, in addition to the full Procrustes mean, in particular also the intrinsic shape mean, i.e.\, the Riemannian center of mass in the shape space. \citet{Dryden2014MeanShapes} discuss properties of different shape mean concepts, pointing out that the full Procrustes mean is more robust with respect to outliers than the intrinsic mean or the \emph{partial} Procrustes mean fixing scale to unit size. Further discussion of these three mean concepts, which all present Fréchet means based on different distances, can be found in \citet{Huckemann2012meaning}. 
The full Procrustes mean also arises as the mode 
of a complex Bingham distribution \citep{Kent1994}  on (unit-norm) landmark configurations  $\mathbf{X} \in \mathbb{C}^k$ of $k$ landmarks, which is commonly used to model planar landmark shapes, identifying the real plane $\mathbb{R}^2 \cong \mathbb{C}$ with the complex numbers.
Moreover, it corresponds to the leading eigenvector of the complex covariance matrix of $\mathbf{X}$,  an important point we generalize for the estimation strategy proposed for curve mean shapes in this paper. 

Compared to landmark shapes, different additional challenges arise for shapes of curves: 
invariance with respect to re-parameterization (warping) 
is one that is highly related to the registration problem in function data analysis \citep[\FDA,][]{RamsaySilverman2005}. In the context of shape analysis of curves, \citet{Srivastava2011ShapeAnaofElasticCurves} propose an \emph{elastic} re-parameterization invariant metric, 
allowing to define a proper distance between two curves via optimal warping alignment. Greatly simplifying the formulation of the metric by working with square-root-velocity (\SRV) transformations of the curves, 
this lead to a rapidly growing literature on \emph{functional} shape analysis of curves in the \SRV-framework \citep[see e.g.,][]{SrivastavaKlassen2016}. 
However, so far the focus lay on elastic generalization of the intrinsic shape mean instead of the (potentially more robust) full Procrustes mean which we generalize here. In simulations, we  illustrate how the novel elastic full Procrustes mean estimation yields improved mean estimates in irregularly/sparsely sampled data (sometimes even as an estimator of the intrinsic shape mean, compared to existing estimators designed for this alternative mean). 

\emph{Sparsely/irregularly} observed curves have been considered in the \SRV-framework by \citet{steyer2021elastic},  
however, only restricting to re-parameterization invariance and not investigating shape means. 
Such data with a comparatively low number of samples per curve often results in practice when the sampling rate of a measurement device is limited, or the resolution of images used for curve segmentation is coarse. In  \FDA, 
sparse/irregular functional data is commonly distinguished from dense/regular data, as it requires  explicit treatment. Models for sparse/irregular data are often based on smooth (spline) function bases 
and commonly involve 
assumption of (small) measurement errors on the discrete curve evaluations 
\citep{GrevenScheipl2017}. 

Focusing on shape analysis of sparsely/irregularly measured curves, 
we consider the full Procrustes mean concept particularly attractive due to its robustness known from landmark shape analysis, and due to its direct connection to the covariance structure of the data, which allows relying on a core estimation strategy in sparse/irregular \FDA: 
following \citet{Yao:etal:2005}, covariance smoothing has become a major tool
for sparse/irregular \FDA, allowing to reconstruct the functional covariance structure based on sparse evaluations. \citet{cederbaum2018fast, reiss2020tensorFPC} discuss (symmetric) tensor-product spline smoothing for this purpose, considering univariate functional data. \citet{HappGreven2018} generalize univariate approaches to conduct functional principal component analysis also for multivariate sparse/irregular data.

In this paper, our contributions are to  1.\ characterize the complex covariance (decomposition) of rotation-invariant bivariate stochastic processes. This gives us the  basis to  2.\ develop
Hermitian covariance smoothing, 
which we 3.\ use to propose a covariance-based  
estimation method for  the 4.\ novel (elastic) full Procrustes means we propose as a more robust notion of elastic shape mean, with a particular focus also on sparsely/irregularly sampled curves. For such realistic curve measurements  we 5.\ characterize scenarios where exact sampling of the necessary SRVs/derivatives is feasible. 

In the following, we first discuss in Section 2 complex stochastic processes as random elements of Hilbert spaces, illustrating their convenience for rotation-invariant bivariate \FDA\, and propose Hermitian tensor-product smoothing 
for complex functional principle component analysis. 
This lays the groundwork for the second part 
of the paper in Section 3, where 
we introduce the notion of elastic (and inelastic) full Procrustes mean shapes of plane curves based on the \SRV-framework. 
We show conditions 
under which exactly  observing \SRV s (i.e., curve derivatives) of sparsely/irregularly measured curves  is feasible and propose estimation of their full Procrustes means via Hermitian covariance smoothing.
Finally, we present an elastic full Procrustes analysis of tongue outlines observed from participants of a phonetic study and validate the proposed approach in three simulation scenarios in Sections 5 and 4.
Proofs for all propositions are given in 
an online supplement. 
A ready to use implementation is offered in the R-package \texttt{elastes} (\texttt{github.com/mpff/elastes}).

\section{Hermitian covariance smoothing}
\label{sec:covariance}

\subsection{Complex processes and rotation invariance}
\label{sec:covariance:complex_processes}

Although functional data analysis traditionally focuses on Hilbert spaces over $\mathbb{R}$ \citep[compare, e.g.,][]{Hsing2015TheoreticalFoundations}, underlying functional analytic statements cover Hilbert spaces over $\mathbb{C}$ as well \citep[e.g.,][]{Rynne2007linearFuncAna}.
This lets us formulate principal component analysis for complex-valued functional data and underlying concepts 
in  analogy to the real case in the following. 
Subsequently, we present two results on the relation of complex to bivariate (real) functional data and on the convenience of a complex viewpoint under rotation invariance that will be key in our estimation approach.
Although complex stochastic processes have been discussed in the literature \citep{neeser1993proper}, we are not aware of any previous discussion of the results we present in this section.
In the complex viewpoint, the real plane $\mathbb{R}^2$ is identified with the complex numbers $\mathbb{C}$ via the canonical vector space isomorphism $\kappa: \mathbb{C} \rightarrow \mathbb{R}^2$, $z \mapsto \mathbf{z} = (\Re(z), \Im(z))^\top$ mapping $z\in\mathbb{C}$ to its real part $\Re(z)$ and imaginary part $\Im(z)$.
By $z^\dagger$ we denote the complex conjugate $\Re(z) - \im\, \Im(z)$ of $z\in\mathbb{C}$, with $\im^2 = -1$, or more generally the Hermitian adjoint (conjugate transpose) for complex matrices or operators.
Rotation of $\mathbf{z} \in \mathbb{R}^2$ by $\omega \in \mathbb{R}$ radians simplifies to scalar multiplication $\exp(\im\,\omega)\, z \in \mathbb{C}$ in complex representation. 

Let $Y$ be a complex-valued stochastic process with realizations $y: \mathcal{T} \rightarrow \mathbb{C}$  in $\mathbb{L}^2(\mathcal{T}, \mathbb{C})$,
where $\mathcal{T}$ is a compact metric space with finite measure $\nu$. Here, $\mathcal{T}=[0,1]$ is typically  the unit interval with $\nu$ the Lebesgue measure, and $t\in \mathcal{T}$ is referred to as ``time". 
The complex, separable Hilbert space $\mathbb{L}^2(\mathcal{T}, \mathbb{C})$ of square-integrable complex-valued functions is equipped with the inner product $\langle x, y \rangle = \int x^\dagger(t) y(t) \,d\nu(t)$ for $x,y \in \mathbb{L}^2(\mathcal{T}, \mathbb{C})$ and the corresponding norm $\|\cdot\|$.

\begin{definition}
	\begin{enumerate}[i)]
		\item  $Y$ is called \emph{random element} in a real or complex Hilbert space $\mathbb{H}$ if $\langle x, Y\rangle$ is measurable for all $x\in \mathbb{H}$ and the distribution of $Y$ is uniquely determined by the (marginal) distributions of $\langle x, Y\rangle$ over $x \in \mathbb{H}$.
		\item The mean $\mu\in\mathbb{H}$ and covariance operator $\Sigma: \mathbb{H} \rightarrow \mathbb{H}$ of a random element $Y$ are defined via $\langle \mu, x \rangle = \mathbb{E}\left( \langle Y, x \rangle \right)$ and $\langle \Sigma(x), y \rangle = \mathbb{E}\left( \langle x, Y - \mu \rangle \langle Y - \mu, y \rangle  \right)$ for all $x,y \in \mathbb{H}$.
	\end{enumerate}	
\end{definition}

In the following, we assume $Y$ is a random element of $\mathbb{L}^2(\mathcal{T}, \mathbb{C})$. Being self-adjoint and compact, its covariance operator $\Sigma$ admits a representation $\Sigma(f) = \sum_{k\geq1} \lambda_k \langle e_k, f\rangle e_k$ via countably many eigenfunctions $e_1, e_2, \dots \in \mathbb{L}^2(\mathcal{T}, \mathbb{C})$, $\Sigma(e_k) = \lambda_k e_k$, with real eigenvalues $\lambda_1 \geq \lambda_2 \geq \dots \geq 0$ of $\Sigma$ (see \Supplement). The $\{e_k\}_k$ form an orthonormal basis of the Hilbert subspace formed by the closure of the image of $\Sigma$. 
The random element can be represented as $Y = \mu + \sum_{k\geq1} \langle e_k, Y - \mu \rangle e_k$ with probability one. The scores $Z_k = \langle e_k, Y - \mu \rangle$, $k\geq 1$, are complex random variables with mean zero and covariance $\operatorname{Cov}\left(Z_k, Z_{k'}\right) =\mathbb{E}\left(\langle Y - \mu, e_{k}\rangle \langle e_{k'}, Y - \mu \rangle\right) = \lambda_k \ind_{\{k'\}}(k)$, where $\ind_{\mathcal{S}}(t) = 1$ if $t\in\mathcal{S}$ and 0 else for a set $\mathcal{S}$.

$Y$ is canonically identified with the bivariate real process $\mathbf{Y}= \kappa(Y) = (\Re(Y), \Im(Y))^\top$, random element in the Hilbert space $\mathbb{L}^2(\mathcal{T}, \mathbb{R}^2)$ with the inner product of $\mathbf{x} = \kappa(x), \mathbf{y} = \kappa(y)$, $x,y \in \mathbb{L}^2(\mathcal{T}, \mathbb{C})$, defined by $\langle \mathbf{x}, \mathbf{y} \rangle =  \int \Re\left(x(t)\right) \Re\left(y(t)\right) \,d\nu(t) + \int \Im\left(x(t)\right) \Im\left(y(t)\right) \,d\nu(t)= \Re\left(\langle x, y \rangle\right)$.

\begin{theorem}
	\label{lem:bivariateCov}
	Define the pseudo-covariance operator $\Omega$ of $Y$ with mean $\mu$ by $\langle \Omega(x), y \rangle = \mathbb{E}\left(\langle Y-\mu, x \rangle \langle Y-\mu, y \rangle \right)$ for all $x,y\in \mathbb{L}^2(\mathcal{T}, \mathbb{C})$, and let $\mathbf{\Sigma}$ denote the covariance operator of $\mathbf{Y} = \kappa(Y)$. Then the covariance and pseudo-covariance operators $\Sigma$ and $\Omega$ of $Y$ together determine $\mathbf{\Sigma}$ via 
	$$ \kappa^{-1} \circ \mathbf{\Sigma} \circ \kappa = {(\Sigma + \Omega)}/{2}.$$
\end{theorem}

Aiming at shape analysis, we are particularly interested in rotation-invariant distributions $\mathfrak{L}(\mathbf{Y})$ of $\mathbf{Y} = \kappa(Y)$, corresponding to $\mathfrak{L}(Y) = \mathfrak{L}(\exp(\im\,\omega)Y)$ for all $\omega\in\mathbb{R}$. In this case, $\mathfrak{L}(Y)$ is typically referred to as `proper', `circular' or `complex symmetric' \citep{neeser1993proper, picinbono1996second, Kent1994} and the simplification by taking a complex approach becomes evident: 

\begin{theorem}
	\label{lem:bivariateCov_cplxsymmetry}
	A random element $Y$ in $\mathbb{L}^2(\mathcal{T}, \mathbb{C})$ with covariance operator $\Sigma$ with eigenbasis $\{e_k\}_k$ and corresponding eigenvalues $\{\lambda_k\}_k$  follows a complex symmetric distribution 
	if and only if all scores $Z_k = \langle e_k, Y - \mu\rangle$ with $\lambda_k>0$ do, and additionally the mean of  $Y$ is $\mu=0$. 
	In this case,
	\begin{enumerate}[i)]
		\item \label{lem:bivariateCov1} the pseudo-covariance $\Omega$ of $Y$ vanishes, i.e.\ $\Omega(y) = 0$ for all $y\in\mathbb{L}^2(\mathcal{T}, \mathbb{C})$, and the covariance operator $\mathbf{\Sigma}$ of the bivariate process $\mathbf{Y}=\kappa(Y)$ is completely determined by $\Sigma$;
		\item \label{lem:bivariateCov2} the pairs $\mathbf{e}_k = \kappa(2^{-1/2} e_k), \mathbf{e}_{-k} = \kappa(\im\, 2^{-1/2} e_k) \in \mathbb{L}^2(\mathcal{T}, \mathbb{R}^2)$ yield an eigendecomposition $\mathbf{\Sigma}(\mathbf{f}) = \sum_{k\neq 0} \lambda_k  \langle \mathbf{e}_k, \mathbf{f} \rangle \mathbf{e}_k$ of $\mathbf{\Sigma}$. With probability one, $\mathbf{Y} = \sum_{k\neq 0} \mathbf{e}_k \mathbf{Z}_k$ with uncorrelated real scores $\mathbf{Z}_k$ with mean zero, variance $\operatorname{var}\left(\mathbf{Z}_k\right) = \lambda_{k}$ and $\kappa(Z_k)=(\mathbf{Z}_k, \mathbf{Z}_{-k})^\top$.  	
	\end{enumerate}
\end{theorem}

While rotation invariance of $\mathfrak{L}(\mathbf{Y})$ leads to even multiplicities in the eigenvalues of the bivariate covariance operator $\mathbf{\Sigma}$, it does not pose a constraint on the complex eigenvalues and eigenfunctions of $\Sigma$, which would complicate the eigendecomposition. Here, rotation invariance of $\mathfrak{L}(\mathbf{Y})$ instead translates to complex symmetry of the distribution of the scores $Z_k$. 

Mean and covariance structure of $Y$ can also be approached from the point-wise mean $\mu^*(t) = \mathbb{E}\left(Y(t)\right)$ and Hermitian covariance surface $C(s,t) = \mathbb{E}\left((Y(s) - \mu^*(t))^\dagger(Y(t) - \mu^*(t))\right) = C(t,s)^\dagger$. Under complex symmetry, we obtain again $\mu^*(t) = 0$, while the auto-covariances $\mathbb{E}(\Re(Y(s)) \Re(Y(t))) = \mathbb{E}(\Im(Y(s)) \Im(Y(t))) =\Re\left(C(s,t)\right)$ and  cross-covariances $\mathbb{E}(\Re(Y(s)) \Im(Y(t))) = - \mathbb{E}(\Im(Y(s)) \Re(Y(t))) = \Im\left(C(s,t)\right)$ of the bivariate $\mathbf{Y}$ are completely determined by $C(s,t)$, as shown in the \Supplement. 
The integral operator $\Sigma^*(f)(t) = \int C(s,t) f(s) \, d\nu(s)$ on $\mathbb{L}^2(\mathcal{T}, \mathbb{C})$ induced by the covariance surface again constitutes a compact and self-adjoint operator and admits, as such, an eigendecomposition. In fact, under standard assumptions such as continuity of $\mu^*(t)$ and $C(s,t)$, Fubini allows switching integrals and the point-wise mean $\mu^* = \mu$ coincides with the mean element and the operator $\Sigma^* = \Sigma$ with the covariance operator. In this case, the eigendecomposition of $\Sigma$ also yields a decomposition 
\begin{equation}\nonumber
	\label{eigendecomposition}
	C(s,t) = \sum_{k\geq1} \lambda_k e_k^\dagger(s) e_k(t)
\end{equation}
of the covariance surface.

\subsection{Hermitian covariance estimation via tensor-product smoothing}
\label{sec:covariance:covariance_smoothing}

Based on a densely/regularly sampled collection of realizations $y_1, \dots, y_n: \mathcal{T} \rightarrow \mathbb{C}$ (with equal grids) of a complex symmetric process $Y$, the covariance surface $C(s,t)$ of $Y$ can be estimated by the empirical covariance surface
$\hat{C}_{emp.}(s,t) = \frac{1}{n} \sum_{i=1}^{n}  y_i^\dagger(s) y_i(t)$ for each pair of grid-points $s,t$.
This is, however, not possible in a sparse/irregular setting where only a limited number of evaluations $y_i(t_{i1}) = y_{i1}, \dots, y_i(t_{in_i}) = y_{in_i}$ are available for $i=1,\dots, n$ such that, for a given $(s,t)$-tuple, $\hat{C}_{emp.}(s,t)$ would only be based on few observations if computable at all. Consequently, some kind of smoothing over samples becomes necessary and, following the seminal work of \citet{Yao:etal:2005}, covariance estimation in the sparse/irregular functional case has widely been approached as a non-/semi-parametric regression problem. We proceed accordingly in the complex case and model $\mathbb{E}\left(Y^\dagger(s) Y(t) \right) = C(s,t)$ with a (smooth) regression estimator $\hat{C}(s,t)$ fitted to response products $y_{ij}^\dagger y_{i\ddot{\jmath}}$ at respective tuples $(t_{ij}, t_{i\ddot{\jmath}})\in\mathcal{T}^2$, for $j,\ddot{\jmath} = 1, \dots, n_i$ and $i=1,\dots, n$. 
Here, it is often reasonable to assume that, in fact, only measurements $\tilde{y}_{ij} = y_{ij} + \varepsilon_{ij}$ are observed with $\varepsilon_{ij} = \varepsilon_i(t_{ij})$ uncorrelated measurement errors originating from a white noise error process $\varepsilon(t)$, $t\in \mathcal{T}$. This leads to a combined covariance $\widetilde{C}(s,t) = C(s,t) + \tau^2(t)\, \ind_{\{s\}}(t)$ with $\tau^2(t) = \operatorname{var}(\varepsilon(t))$ the variance function of $\varepsilon(t)$. Assuming $C(s,t)$ continuous, $\tau^2(t)$ can be distinguished as a discontinuous ``nugget effect" at $s=t$. 

Generalizing the approach of \citet{cederbaum2018fast} for real covariance surfaces to the complex case, we propose to model $C(s,t)$ using a Hermitian tensor-product smooth 
\begin{equation}\nonumber
	C(s,t) \approx \sum_{g=1}^{m} \sum_{k=1}^{m} \xi_{gk} f_g(s) f_k(t) = \mathbf{f}^\top\!(s)\, \mathbf{\Xi}\, \mathbf{f}(t) = \operatorname{vec}(\mathbf{\Xi})^\top (\mathbf{f}(t) \otimes \mathbf{f}(s)) 
\end{equation}
with real-valued basis functions $f_k: \mathcal{T} \rightarrow \mathbb{R}$, $k=1, \dots, m$, stacked to a vector $\mathbf{f}(t) = (f_1(t), \dots, f_m(t))^\top$, and a Hermitian coefficient matrix $\mathbf{\Xi} = \{\xi_{kk'}\}_{kk'} = \mathbf{\Xi}^\dagger \in \mathbb{C}^{m\times m}$ ensuring ${C}(s,t)$ is Hermitian as required, with $\operatorname{vec}$ stacking the columns of a matrix to a vector.
Both the symmetry of the real part $\Re(\mathbf{\Xi})=\Re(\mathbf{\Xi})^\top$ and the anti-symmetry of the imaginary part $\Im(\mathbf{\Xi})=-\Im(\mathbf{\Xi})^\top$ present linear constraints. As such they can be implemented via suitable basis transforms $\mathbf{D}_{\Re}\, ({\mathbf{f} \otimes \mathbf{f}})(s,t)$ and $\mathbf{D}_{\Im}\, ({\mathbf{f} \otimes \mathbf{f}})(s,t)$ of the tensor-product basis $({\mathbf{f} \otimes \mathbf{f}})(s,t) = (f_1(s) \mathbf{f}^\top\!(t), \dots, f_m(s) \mathbf{f}^\top\!(t))^\top$ with transformation matrices $\mathbf{D}_{\Re} \in \mathbb{R}^{{(m^2+m)}/{2} \times m^2}$ and $\mathbf{D}_{\Im} \in \mathbb{R}^{{(m^2-m)}/{2} \times m^2}$ for the symmetric and anti-symmetric part, respectively. Since $\mathbb{R}^{m\times m}$ is a direct sum of the vector spaces of symmetric and antisymmetric $m\times m$ matrices, $\mathbf{D}_{\Im}$ can be obtained, e.g., as basis matrix of the null space of $\mathbf{D}_{\Re}$. A possible construction of $\mathbf{D}_{\Re}$ is described by \citet{cederbaum2018fast}. 
In addition to the covariance, we also model the error variance $\tau^2(t) \approx {\boldsymbol{\xi}}_{\tau}^\top \mathbf{f}_\tau(t)$ expanded in a real function basis $\mathbf{f}_\tau(t)$. Here, it might be convenient to employ the same basis $\mathbf{f}_\tau(t) = \mathbf{f}(t)$, or to assume constant error variance by setting $\mathbf{f}_\tau(t) = 1$ for all $t$. At any $t$ with $\tau^2(t) = 0$, the measurement error is excluded from the model.  
The coefficients $\operatorname{vec}\big(\widehat{\mathbf{\Xi}}\big) = \mathbf{D}_{\Re} \hat{\boldsymbol{\xi}}_{\Re} + \im\, \mathbf{D}_{\Im} \hat{\boldsymbol{\xi}}_{\Im}$ of the covariance estimator $\hat{C}(s,t)$ minimize the penalized least-squares criterion 
\begin{equation}\nonumber
	\textsc{pls}(\mathbf{\Xi}, \boldsymbol{\xi}_{\tau}) = \sum_{i, j, \ddot{\jmath}} \left| \mathbf{f}^\top\!(t_{ij})\, \mathbf{\Xi}\, \mathbf{f}( t_{i\ddot{\jmath}}) + {\boldsymbol{\xi}}_{\tau}^\top \mathbf{f}_\tau(t_{ij}) \, \ind_{\{\ddot{\jmath}\}}(j) - y_{ij}^\dagger y_{i\,\ddot{\jmath}} \right|^2 + \textsc{pen}(\mathbf{\Xi}, \boldsymbol{\xi}_{\tau})
\end{equation}
with quadratic penalty term $\textsc{pen}$. They are seperately obtained for the real and imaginary part  of the covariance using  $\textsc{pls} = \textsc{pls}_{\Re} + \textsc{pls}_{\Im}$ via the well-known linear estimators $\hat{\boldsymbol{\xi}}_{\Re} \in \mathbb{R}^{{(m^2+m)}/{2}}$, $\hat{\boldsymbol{\xi}}_{\tau} \in \mathbb{R}^{m_\tau}$ minimizing $\textsc{pls}_{\Re} = \sum_{i, j, \ddot{\jmath}} 
(
{\boldsymbol{\xi}}_{\Re}^\top\, \mathbf{D}_{\Re}\,({\mathbf{f} \otimes \mathbf{f}})(t_{ij},t_{i\ddot{\jmath}}) + \boldsymbol{\xi}_\tau^\top \mathbf{f}(t_{ij}) \, \ind_{\{\ddot{\jmath}\}}(j)
- \Re(y_{ij}^\dagger y_{i\,\ddot{\jmath}})
)^2 + 
\eta_{\Re}\, {\boldsymbol{\xi}}_{\Re}^\top \mathbf{D}_{\Re} \mathbf{P}_{\otimes}\,\mathbf{D}_{\Re}^\top {\boldsymbol{\xi}}_{\Re}
+ 
\eta_{\tau}\, {\boldsymbol{\xi}}_{\tau}^\top \mathbf{P}_{\tau}\, {\boldsymbol{\xi}}_{\tau}$, and $\hat{\boldsymbol{\xi}}_{\Im} \in \mathbb{R}^{({m^2-m})/{2}}$ minimizing
$\textsc{pls}_{\Im} = \sum_{i, j, \ddot{\jmath}} 
(
{\boldsymbol{\xi}}_{\Im}^\top\, \mathbf{D}_{\Im}\,({\mathbf{f} \otimes \mathbf{f}})(t_{ij},t_{i\ddot{\jmath}})
- \Im(y_{ij}^\dagger y_{i\,\ddot{\jmath}})
)^2 + 
\eta_{\Im}\, {\boldsymbol{\xi}}_{\Im}^\top \mathbf{D}_{\Im} \mathbf{P}_{\otimes}\,\mathbf{D}_{\Im}^\top {\boldsymbol{\xi}}_{\Im}$.
Smoothing parameters $\eta_{\tau},\eta_{\Re}, \eta_{\Im} > 0$ control the penalty induced by the matrices $\mathbf{P}_\tau$ and  $\mathbf{P}_{\otimes} = \mathbf{P} \otimes \mathbf{I}_{m} + \mathbf{I}_{m} \otimes \mathbf{P}$ constructed from a suitable penalty matrix $\mathbf{P} \in \mathbb{R}^{m\times m}$ for the basis coefficients of $\mathbf{f}(t)$ and the $m\times m$ identity matrix $\mathbf{I}_m$.
Assuming the error variance not too heterogeneous over $t$, the matrix $\mathbf{P}_\tau$ should typically penalize deviations from the constant. 
Based on a working normality assumption,  $\eta_{\Re}, \eta_{\tau}$ and $\eta_{\Im}$ are obtained via restricted maximum likelihood (\REML) estimation \citep{Wood2017}, avoiding computationally intense hyper-parameter tuning. 
For practical use, we extended the R package \texttt{sparseFLMM} \citep{sparseFLMM} to also offer  anti-symmetric tensor-product smooths for the package \texttt{mgcv} \citep{Wood2017} used for estimation. For asymptotic theory on the used penalized spline estimators, please see \citet{WoodEtAl2016GeneralSmoothModels}.

After estimation, eigenfunctions $e_k$ and eigenvalues $\lambda_k$ of the covariance operator $\Sigma$ of $Y$ are estimated by the corresponding eigendecomposition $\hat{C}(s,t) = \sum_{k\geq1}\hat{\lambda}_k \hat{e}_k^\dagger(s) \hat{e}_k(t)$ of the respective covariance operator $\hat{\Sigma}$. 
Based on $\hat{\mathbf{\Xi}}$ and the Gram matrix $\mathbf{G} = \{\langle f_k, f_{k'} \rangle\}_{k,k'=1}^m$, the right eigenvalues of the matrix $\mathbf{G}^{-1}\hat{\mathbf{\Xi}}$ yield the eigenvalues $\hat{\lambda}_k$ of $\hat{\Sigma}$. The corresponding eigenvectors $\hat{\boldsymbol{\theta}}_k$ yield the eigenfunctions $\hat{e}_k(t) = \hat{\boldsymbol{\theta}}_k^\top \mathbf{f}(t)$ of $\hat{\Sigma}$ for $k=1, \dots, m$. 
To ensure positive-definiteness, eigenfunctions with $\lambda_{k} \leq 0$ are omitted from the basis. Nonnegativity of $\tau^2$ is enforced post-hoc by setting negative values to zero.

\section{Elastic full Procrustes analysis \label{sec:fullproc}}

\subsection{Full Procrustes analysis in the square-root-velocity framework}
\label{sec:fullproc:general}

\newcommand{\arc}[1]{\check{#1}}

To now  propose (elastic) full Procrustes means  for plane curves, we first introduce some underlying concepts and notation.
We understand a \textit{parameterized} curve as a function $\beta: [0,1] \rightarrow \mathbb{C}$, which is assumed absolutely continuous such that the component-wise derivative $\dot{\beta}(t) = \frac{d}{dt} \Re\circ \beta(t) + \im \frac{d}{dt} \Im\circ \beta(t)$ exists almost everywhere and also the integral $\varphi_\beta(t) = \int_{0}^{t} |\dot{\beta}(s)|\, ds < \infty$ exists for $t\in[0,1]$. 
Denoting the set of absolutely continuous functions $[0,1] \rightarrow \mathbb{C}$ by $\mathcal{AC}([0,1], \mathbb{C})$, 
we further assume $\beta \in \mathcal{AC}^*([0,1], \mathbb{C}) = \mathcal{AC}([0,1], \mathbb{C}) \setminus \{t \mapsto z : z \in \mathbb{C}\}$ excluding constant functions as degenerate curves.
Then $\beta$ has positive length $L(\beta) = \varphi_\beta(1) > 0$, and a constant-speed parameterization $\alpha = \beta \circ \varphi_\beta^{-1}$ always exists, when taking the generalized inverse $\varphi_\beta^{-1}(s) = \inf\{t \in [0,1]: s\,L(\beta) \leq \varphi_{\beta}(t)\}$, $s\in [0,1]$. 
Two parameterized curves $\beta_1, \beta_2\in \mathcal{AC}^*([0,1], \mathbb{C})$ are said to describe the same curve if they have the same constant-speed parameterization $\alpha_1 = \alpha_2$, which yields an equivalence relation $\beta_1 \approx \beta_2$.
An \emph{oriented} curve is then defined as equivalence class with respect to `$\approx$'. If the context allows it, we commonly refer to both oriented plane curves and their parameterized curve representatives $\beta$ simply as ``curve".  
A diffeomorphism $\gamma: [0,1] \rightarrow [0,1]$ which is orientation-preserving, i.e., with derivative $\dot{\gamma}(t) > 0$ for $t\in[0,1]$, is called warping function and the set of such warping functions is denoted by $\Gamma$. With obviously $\beta \circ \gamma \approx \beta$, warping can equivalently be used to define equivalence of parameterized curves \citep[see, e.g,][which we also recommend for further details]{bruveris2016optimal}.
Abstracting also from the particular coordinate system for $\mathbb{C}$, the shape of an (oriented) curve with parameterization $\beta$ is then defined by $[ \beta ] = \{ \tilde{\beta} \in \mathcal{AC}([0,1], \mathbb{C}) :  u\, \tilde{\beta} + v \approx \beta \text{ for some }u,v\in\mathbb{C}\} $, its equivalence class under translation by any $v$, rotation by $u/|u| = \exp(\im\, \omega), \omega\in\mathbb{R}$, re-scaling by $|u|$, and warping.
This presents our ultimate object of interest.
In establishing a metric on the quotient space $\mathfrak{B} = \{[\beta] : \beta \in \mathcal{AC}^*([0,1], \mathbb{C})\}$, we follow and extend the idea of the full Procrustes distance in landmark shape analysis and define
\begin{equation}
	\label{eq:FullProc}
	d_{\Psi}([ \beta_1 ], [ \beta_2]) = \inf_{\substack{a \geq 0, v_i \in \mathbb{C},\\ \omega_i \in \mathbb{R}, \gamma_i \in \Gamma}}
\|\ \Psi\left(\exp(\im\,\omega_1)\ \beta_1\circ\gamma_1 + v_1\right) - a\ \Psi\left(\exp(\im\,\omega_2)\ \beta_2\circ\gamma_2 + v_2\right)\ \|  
\end{equation}
for $\beta_1, \beta_2 \in \mathcal{AC}^*([0,1], \mathbb{C})$, with a pre-shape map $\Psi:\mathcal{AC}^*([0,1], \mathbb{C}) \rightarrow \mathbb{L}^2([0,1], \mathbb{C}),\ \beta \mapsto q$ discussed below allowing to base computation on the $\mathbb{L}^2$-metric while optimizing over all involved invariances.
Acting differently than the other curve-shape preserving transformations \citep[see, e.g.,][Chap. 3.7]{SrivastavaKlassen2016}, scale invariance is generally accounted for by a normalization constraint $\|\Psi(\beta)\| = \|q\| = 1$ for all $\beta$. Fixing $a=1$ in (\ref{eq:FullProc}) would yield a partial-Procrustes-type distance instead. Replacing also the norm by the arc length on the $\mathbb{L}^2$-sphere would correspond to an intrinsic shape distance.
To obtain a proper and sound metric, $\Psi$ has to be carefully chosen. It is well-known that directly applying the $\mathbb{L}^2$-metric on the level of parameterized curves $\beta$ is problematic, since in this case the warping action of $\gamma \in \Gamma$ is not by isometries \citep{SrivastavaKlassen2016}. 

We set $\tilde{\Psi}(\beta)$ to the \SRV-transformation \citep{Srivastava2011ShapeAnaofElasticCurves}, representing a curve $\beta$ by its square-root-velocity (\SRV) transform $q:[0,1] \rightarrow \mathbb{C}$ given by $q(t) = {\dot{\beta}(t)}/{|\dot{\beta}(t)|^{1/2}}$ wherever this is defined and $q(t) = 0$ elsewhere. 
Indeed, $q$ is square-integrable with $\|q\|^2 = \int_{0}^{1} |q(t)|^2\, dt = L(\beta)$. Since $\tilde{\Psi}\left(u\, \beta \circ \gamma + v\right)(t) = ({u}/{|u|^{1/2}})\, q \circ \gamma(t) {\dot{\gamma}(t)}^{1/2}$, warping and rotation act by isometries with $\|\tilde{\Psi}(a \exp(\im\, \omega)\, \beta_1 \circ \gamma + v) - \tilde{\Psi}(a \exp(\im\, \omega)\, \beta_2  \circ \gamma + v)\| = {a}^{1/2} \|\tilde{\Psi}(\beta_1) - \tilde{\Psi}(\beta_2)\|$ for any two curves $\beta_1, \beta_2$ and $\gamma\in\Gamma$, $a\geq 0$, $\omega\in\mathbb{R}$, $u,v\in\mathbb{C}$. 
The $\mathbb{L}^2$-metric on the \SRV-transforms induces a metric on the space of parameterized curves modulo translation \citep{bruveris2016optimal}. It is commonly referred to as ``elastic" metric due to the isometric action of $\gamma$ allowing to construct a metric on oriented curves via optimal warping alignment.
$\tilde{\Psi}$ is surjective but not injective, with $\tilde{\Psi}^{-1}(\{\tilde{\Psi}(\beta)\}) = \{\beta + v : v \in \mathbb{C} \} \subset [\beta]$. Without loss of generality, we can, thus, set $\tilde{\Psi}^{-1}(q)(t) = \int_{0}^{t} \dot{\beta}(s) \, ds = \int_{0}^{t} q(s) |q(s)| \, ds$ 
when discussing shapes $[\beta]$. 

\begin{proposition}
\label{lem:efullProcDist}
With $\Psi(\beta) = \tilde{\Psi}({\beta}/{L(\beta)}) = {\tilde{\Psi}(\beta)}/{\|\tilde{\Psi}(\beta)\|}$ the normalized \SRV-transform, $d_\Psi$ defines a metric on $\mathfrak{B}$, referred to as \textit{elastic} full Procrustes distance $d_\mathcal{E}$. It takes the form
\begin{equation}\nonumber
	d_\mathcal{E}^2([ \beta_1 ], [ \beta_2]) = \inf_{u\in \mathbb{C}, \gamma \in \Gamma} \|q_1 - u\, q_2\circ\gamma\, \dot{\gamma}^{1/2}\|^2 =  1 - \sup_{\gamma \in \Gamma} | \langle q_1, q_2 \circ \gamma´\, \dot{\gamma}^{1/2} \rangle |^2
\end{equation}
for $q_i = \Psi(\beta_i)$ unit-norm \SRV-transforms of curve shape representatives $\beta_1, \beta_2 \in \mathcal{AC}^*([0,1], \mathbb{C})$. 
\end{proposition}

With a metric at hand, we may proceed by considering random shapes and define the concept of a Fréchet mean induced by the metric \citep[compare, e.g.,][]{Huckemann2012meaning, ziezold1977expected}. A random element $A$ in a metric space $(\mathfrak{A}, d)$ is a Borel-measurable random variable taking values in $\mathfrak{A}$. A (population) Fréchet mean or expected element $\mathfrak{m} \in \mathfrak{A}$ is defined as a minimizer of the expected square distance
\begin{equation}\nonumber
\mathbb{E}\left(d^2(\mathfrak{m}, A) \right) = \sigma^2 = \inf_{\mathfrak{a} \in \mathfrak{A}} \mathbb{E}\left(d^2(\mathfrak{a}, A) \right).
\end{equation}
assuming a finite variance $\sigma^2 < \infty$.

\begin{definition}
A \textit{random (plane curve) shape} $[B]$ is a random element in the shape space $\mathfrak{B}$ equipped with the elastic full Procrustes distance $d_\mathcal{E}$.
We call a Fréchet mean $[\mu_\mathcal{E}] \in \mathfrak{B}$ of $[B]$, represented by $\mu_\mathcal{E} \in \mathcal{AC}^*([0,1], \mathbb{C})$, an \textit{elastic full Procrustes mean} of the random shape $[B]$. 
\end{definition}

As distance computation is carried out on \SRV-transforms, 
it is, however, typically more convenient to consider the mean shape on \SRV-level, i.e.\, via a distribution $\mathfrak{L}\left(Q\right)$ of a random element $Q=\Psi(B)$ in the Hilbert space $\mathbb{L}^2([0,1], \mathbb{C})$ inducing the shape distribution $\mathfrak{L}([B])$.

\begin{proposition}
\label{lem:efullProcMean}
Consider a random element $Q$ in $\mathbb{L}^2([0,1], \mathbb{C})$ with $\|Q\|=1$ almost surely. The elastic full Procrustes means $[\mu_\mathcal{E}]$ of the induced random shape $[B] = [\Psi^{-1}(Q)]$ are determined by their \SRV-transform $\psi_\mathcal{E} = \Psi(\mu_\mathcal{E})$ fulfilling
\begin{equation}
	\label{eq:SRV_FullProc_elastic}
	\psi_\mathcal{E} \in \underset{y : \|y\| = 1}{\operatorname{argmax}} \, \mathbb{E}\big( \sup_{\gamma \in \Gamma} | \langle y, Q \circ \gamma\, \dot{ \gamma}^{1/2} \rangle |^2  \big) = \underset{y : \|y\| = 1}{\operatorname{argmax}} \, \mathbb{E}\big( \sup_{\gamma \in \Gamma}  \langle y, Q \circ \gamma\, \dot{ \gamma}^{1/2} \rangle \langle Q \circ \gamma\ \dot{ \gamma}^{1/2}, y  \rangle  \big).
\end{equation}
\end{proposition}

When fixing $\gamma$ in Equation \eqref{eq:SRV_FullProc_elastic}, the maximum of the quadratic form is obtained at the leading eigenvector of the covariance operator of $Q\circ\gamma\, \dot{\gamma}^{1/2}$, which is carried out in detail in Proposition \ref{lem:ifullProcMean} 
considering \emph{inelastic} full Procrustes means of shapes of parameterized plane curves.
This allows use of Hermitian covariance smoothing, introduced in Section \ref{sec:covariance:covariance_smoothing}, for shape mean estimation. 
Inelastic mean estimation will present a building block in elastic mean estimation but is also interesting in its own right, especially in data scenarios involving natural curve parameterizations. 

\begin{proposition}
\label{lem:ifullProcMean}
For $\beta \in \mathcal{AC}^*([0,1], \mathbb{C})$ define the shape of a \emph{parameterized} plane curve as $(\beta) = \{u\, \beta + v : u,v\in\mathbb{C}\}$. Then
\begin{enumerate}[i)]
	\item the \emph{inelastic} full Procrustes distance $d_{\not \mathcal{E}}((\beta_1), (\beta_2)) = \inf_{u\in \mathbb{C}} \| q_1 - u q_2\|$ with $\|q_i\| = 1$ for $\Psi(\beta_i) = q_i$, $i=1,2$, defines a metric on the shape space $\tilde{\mathfrak{B}} = \{(\beta) : \mathcal{AC}^*([0,1], \mathbb{C})\}$ of parameterized plane curves and can be expressed as $d^2_{\not \mathcal{E}}((\beta_1), (\beta_2)) = 1 - |\langle q_1,  q_2\rangle|^2$;
	\item multiplication by $\langle q_1, q_2\rangle^\dagger/|\langle q_1, q_2\rangle| = \operatorname{argmin}_{u:|u|=1} \|q_1 - u q_2\|$ yields rotation alignment of $\beta_2$ to $\beta_1$;
	\item \label{lem:ifullProcMean:eigen} for a complex symmetric random element $Q$ in $\mathbb{L}^2([0,1], \mathbb{C})$ with covariance operator $\Sigma$, let $\mathcal{Y}_1 = \{y : \Sigma(y) = \lambda_1 y\}$ denote the spectrum of the leading eigenvalue $\lambda_1$ of $\Sigma$. Then, $(\mathcal{Y}_1) = \{(y) : y\in\mathcal{Y}_1\}$ is the set of Fréchet means of the random shape $(B)=(\Psi^{-1}(Q))$ in $\tilde{\mathfrak{B}}$ with respect to $d_{\not \mathcal{E}}$, which we refer to as \textit{inelastic} full Procrustes means.
	In particular, the leading eigenfunction $\psi_{\not \mathcal{E}} = e_1$ of an eigendecomposition of $\Sigma$ yields an inelastic full Procrustes mean $(\mu_{\not \mathcal{E}})$ of $(B)$ with \SRV-transform $\psi_{\not \mathcal{E}} = \Psi(\mu_{\not \mathcal{E}})$. It is unique if $\lambda_1$ has multiplicity 1. The variance of $(B)$ is $\sigma_{\not \mathcal{E}}^2 = \mathbb{E}(d^2_{\not \mathcal{E}}((\mu_{\not \mathcal{E}}), (B))) = 1-\lambda_1$.
\end{enumerate}
\end{proposition}

Motivated by Proposition \ref{lem:ifullProcMean} \ref{lem:ifullProcMean:eigen}), we propose to estimate $\psi_{\not \mathcal{E}}$ as leading eigenfunction $\hat{e}_1$ of $\hat{C}(s,t)$ obtained by Hermitian covariance smoothing in Section \ref{sec:fullproc:estimation}, as part of the estimation procedure of $\psi_{\mathcal{E}}$. 
However, before that
we first address the question of how it is still possible to work with derivative-based SRV-curves even in the sparsely observed setting so common in practice. 


\subsection{The square-root-velocity representation in a sparse/irregular setting}
\label{sec:fullproc:sparseSRV}

In practice, the shape of an (oriented) plane curve is observed via a vector $\mathbf{b} = (b_{0},\dots b_{n_0})^\top\in\mathbb{C}^{n_0+1}$ of points, which can be considered evaluations $\beta^*(t^*_j) = b_j$ of some continuous parameterization $\beta^*: [0,1] \rightarrow \mathbb{C}$ of the curve at arbitrary time points $t^*_0 < \dots < t^*_{n_0}$. 
However, fixing the time grid, the derivatives $\dot{\beta}^*(t^*_i)$ are not observable. Instead, evaluations of an \SRV-transform describing the curve can be directly obtained from the finite differences $\Delta_j = b_{j} - b_{j-1}$, 
if the curve segments $\beta^*\left((t^*_{j-1}, t^*_j)\right) \subset \mathbb{C}$ between the observed points in $\mathbf{b}$ have no edges or loops: 

\begin{theorem}[Feasible sampling]
\label{lemma:feasible_sampling}
If $\beta^*$ is continuous and $\beta^*: (t_{j-1}^*, t_j^*) \rightarrow \mathbb{C}$ is injective and continuously differentiable with $\dot{\beta^*}(t)\neq 0$ for all $t\in(t_{j-1}^*, t_j^*)$, for $j = 1, \dots, n_0$, then for any time points $0 < t_1 < \dots < t_{n_0}< 1$ and speeds $w_1, \dots, w_{n_0}>0$, there exists a $\gamma \in \Gamma$ such that 
\begin{equation}\nonumber
	q(t_j) = w_j^{1/2} \, (\beta^*(t^*_{j}) - \beta^*(t^*_{j-1})) = w_j^{1/2} \, \Delta_j \quad  (j = 1, \dots n_0)
\end{equation}
for the \SRV-transform $q$ of $\beta = \beta^* \circ \gamma$.	
\end{theorem}

We call a vector of sampling points $\mathbf{b}$ of a curve \emph{feasible} if the conditions of Lemma \ref{lemma:feasible_sampling} hold. This is always fulfilled if there is a $\beta^* \in (\beta)$ such that $\beta^*$ is continuously differentiable with non-vanishing derivative on all $(0,1)$ and, in particular, if it describes an embedded one-dimensional differentiable submanifold of $\mathbb{R}^2$. If, instead, the curve has edges, they must be contained in $\mathbf{b}$, as well as a point inside of each loop (i.e.\ within each closed curve segment). 

Note that while discrete observations often result in approximate derivative computations, Theorem \ref{lemma:feasible_sampling} ensures that  the derivative-based SRV-transform can be {\it exactly} recovered on a  desired grid - up to a re-parameterization not essential in an analysis invariant to re-parameterization. 
Selected time points $t_1< \dots< t_{n_0}$ and speeds $w_1, \dots, w_{n_0}>0$ implicitly determine the parameterization. In principle, they could be arbitrarily selected due to parameterization invariance of the analysis, but with regard to mean estimation it is desirable to initialize them in a coherent way. 
Without any prior knowledge, constant speed parameterization of underlying curves $\beta$ presents a canonical choice. To approximate this, we borrow from constant speed parameterization $\hat{\beta}$ of the sample polygon with vertices $\mathbf{b}$, implying a piece-wise constant \SRV-transform $\hat{q}(t) = \sum_{j=1}^{n_0} q_j\, \ind_{[s_{j-1}, s_j)}(t)$ of $\hat{\beta}$ with {\SRV}s $q_j = \Delta_j |\Delta_j|^{-1} {L^{1/2}(\hat{\beta})}$, with $L(\hat{\beta}) = \sum_{j=1}^{n_0}  |\Delta_j|$ the length of the polygon. The nodes $s_j = \sum_{l=1}^{j} {|\Delta_j|}/{L(\hat{\beta})}$ indicate the vertices $\hat{\beta}(s_j) = b_j$, $j=0,\dots, n_0$. 
In accordance with that, we set $q(t_j) = q_j$ and select time points $t_j = {(s_j + s_{j-1})}/{2}$ in the center of the edges, for $j=1,\dots, n_0$. Depending on the context other choices might be preferable, but we generally expect this choice to imply reasonable starting parameterizations.

\subsection{Estimating elastic full Procrustes means via Hermitian covariance smoothing}
\label{sec:fullproc:estimation}

Consider a collection of sample vectors $\mathbf{b}_i \in \mathbb{C}^{n_i+1}$ of $n$ curves $\beta_i \in \mathcal{AC}^*([0,1], \mathbb{C}$), $i=1,\dots, n$, realizations of a random plane curve shape $[B]$.
For scale-invariance, sample polygons are normalized to unit-length. Moreover, the $\mathbf{b}_i$ are assumed feasibly sampled to represent them by evaluations $q_i(t_{ij}) = q_{ij}$ at time points $t_{ij}$, $j=1,\dots,n_i$, of the \SRV-transform $q_i$ of $\beta_i$ as described in the previous Section \ref{sec:fullproc:sparseSRV}.
We model an elastic full Procrustes mean $[\mu]$ of $[B]$ via the \SRV-transform $\psi$ of $\mu\in\mathcal{AC}^*([0,1], \mathbb{C})$ expanded as $\psi(t) = \sum_{k=1}^{m} \theta_k f_k(t) = \boldsymbol{\theta}^\top\mathbf{f}(t)$ in a basis $\mathbf{f}(t) = (f_1(t), \dotsm f_m(t))^\top$ of functions $f_k \in \mathbb{L}^2([0,1], \mathbb{R})$, $k=1,\dots, m$, 
with complex coefficient vector $\boldsymbol{\theta} = (\theta_1, \dots, \theta_m)^\top \in \mathbb{C}^m$. For the basis, piece-wise linear B-splines of order 1 present an attractive choice, since they have been proven identifiable under warping-invariance \citep{steyer2021elastic} while still implying continuity of $\psi$ and a differentiable mean curve $\mu$.

The idea of alternating between a) mean estimation on aligned data and b) alignment of the data to the current mean is used for estimation of landmark full Procrustes means \citep[p. 139]{DrydenMardia2016ShapeAnalysisWithApplications} and intrinsic elastic mean curve shapes \citep[p. 319]{SrivastavaKlassen2016}. 
We follow a similar strategy to find an estimator $\hat{\psi}(t) = \hat{\boldsymbol{\theta}}^\top\mathbf{f}(t)$ for $\psi$ but estimate an inelastic full Procrustes mean in a) and base the estimate on Hermitian covariance smoothing for irregularly/sparsely sampled curves. The covariance estimate is also used for estimating normalization and rotation alignment multipliers, which are not directly computable for sparse curve data. For warping alignment in b), we utilize the approach of \citet{steyer2021elastic}, which has proven suitable also for irregularly/sparsely sampled curves.
The single steps of the algorithm are detailed in the following and a discussion of its empirical performance is given in the next section.

\textbf{Initialize} in iteration $h=0$ \SRV-representations  $q^{[h]}_{i}(t_{ij}^{[h]}) = q^{[h]}_{ij}$  with $q^{[0]}_{ij} = q_{ij}$ and $t_{ij}^{[0]} = t_{ij}$ as in Section \ref{sec:fullproc:sparseSRV} for all $i,j$, and repeat the following steps for $h=1, 2, \dots$:
\begin{enumerate}[I.]
\item \label{alg:elasticFullProc:covariance} \textbf{Covariance estimation:} We estimate the covariance surface $C^{[h]}(s,t)$ of a complex symmetric process $Q$ underlying $q_1^{[h]}, \dots, q_n^{[h]}$ with a tensor-product estimator  $\hat{C}^{[h]}(s,t) = \mathbf{f}(s)^\top \hat{\mathbf{\Xi}}^{[h]}\, \mathbf{f}(t)$ with coefficient matrix $\hat{\mathbf{\Xi}}^{[h]} \in \mathbb{C}^{m\times m}$. 
While for dense sampling, an estimate can be directly obtained from the covariance of the $\langle q_i^{[h]}, f_{k}\rangle$ (see \Supplement), we propose Hermitian covariance smoothing as described in Section \ref{sec:covariance} for sparse/irregular data. This yields eigenfunctions $\hat{e}^{[h]}_k$ and eigenvalues $\hat{\lambda}_k^{[h]}$, $k=1,\dots,m$, of the corresponding covariance operator $\hat{\Sigma}^{[h]}$, as well as an estimate $\hat{\tau}^{2[h]}(t) \geq 0$ of the variance of a white noise zero mean residual process $\varepsilon(t)$ at $t\in [0,1]$, if measurement uncertainty on observations $Q(t_{ij}) + \varepsilon(t_{ij})$ is assumed. 

\item \label{alg:elasticFullProc:mean} \textbf{Mean estimation:} Set $\hat{\psi}^{[h]}(t)=\hat{e}^{[h]}_1(t) = \hat{\boldsymbol{\theta}}_1^{[h]\top} \mathbf{f}(t)$ to the leading eigenfunction of $\hat{\Sigma}^{[h]}$ obtained from the leading right eigenvector $\hat{\boldsymbol{\theta}}_1^{[h]}$ of $\mathbf{G}^{-1} \hat{\mathbf{\Xi}}^{[h]}$ with Gramian $\mathbf{G}$ of $\mathbf{f}$. 
This yields an inelastic full Procrustes mean estimate $[\hat{\mu}^{[h]}]=[\Psi^{-1}(\hat{\psi}^{[h]})]$ of the curves with the current parameterization (Proposition \ref{lem:ifullProcMean}), presenting the current estimate of the elastic full Procrustes mean.

\item \label{alg:elasticFullProc:rotation} \textbf{Rotation alignment and re-normalization:} 
For $u_i^{[h]} = \big({z^{[h]}_{i1}}/{|z_{i1}^{[h]}|}\big)^\dagger {(L^{[h]}(\beta_i))^{-1/2}}$ with $z^{[h]}_{i1}=\langle\hat{e}^{[h]}_1, q_i\rangle$, the multiplied $u_i^{[h]} q_i^{[h]}$ has norm $1$ and is rotation aligned to $\hat{\psi}^{[h]}$.
We estimate $u_i^{[h]}$ by $\hat{u}_i^{[h]}$ for $i=1,\dots, n$ based on the covariance estimation by plugging in conditional expectations $\hat{z}_{i1}^{[h]} = \mathbb{E}\big(\langle \hat{e}^{[h]}_1, Q \rangle \mid Q(t_{ij}) + \varepsilon(t_{ij}) = q^{[h]}_{ij},\, j=1,\dots, n_i \big)$ and $\hat{L}^{[h]}(\beta_i) = \mathbb{E}\big(\| Q \|^2 \mid Q(t_{ij}) + \varepsilon(t_{ij}) = q^{[h]}_{ij},\, j=1,\dots, n_i \big)$ under a working normality assumption, an estimation approach in the spirit of \citet{Yao:etal:2005}. Expressions can be found in the \Supplement.

\item \label{alg:elasticFullProc:warping} \textbf{Warping alignment:}
Based on its rotation aligned {\SRV} evaluations, the $i$th curve is (approximately) warping aligned to $\hat\mu^{[h]}$ using the approach of \citet{steyer2021elastic}, where \SRV-transforms are approximated as piece-wise constant functions $\hat{q}_i^{[h]}(t) \approx q_i^{[h]}(t)$ to find the infima of $\|\hat\mu^{[h]} - \hat{q}_i^{[h]} \circ \gamma_i \, \dot{\gamma}_i^{1/2}\|$ over $\gamma_1, \dots, \gamma_n\in\Gamma$. This yields new parameterization time-points $t_{ij}^{[h+1]}$, $j=1,\dots, n_i$, and corresponding {\SRV}s $q_{ij}^{[h+1]} = w_{ij}^{[h]} \hat{u}_{i}^{[h]} q_{ij}^{[h]}$, with  $w_{ij}^{[h]} > 0$ depending on the $t_{ij}^{[h]}$ and $t_{ij}^{[h+1]}$, passed forward to proceed with the next iteration at Step \ref{alg:elasticFullProc:covariance}.
Details can be found in the \Supplement.

\end{enumerate}

\textbf{Stop} the algorithm when $\|\hat{\psi}^{[h]} - \hat{\psi}^{[h-1]}\|$ is below a specified threshold in Step \ref{alg:elasticFullProc:mean}. An additional execution of Steps \ref{alg:elasticFullProc:rotation} and \ref{alg:elasticFullProc:warping} then yields rotation aligned representations of approximately unit-length curves and current time points.

\section{Adequacy and robustness of elastic full Procrustes mean estimation in realistic curve shape data}\label{sec:threespirals}

Familiar everyday shapes offer an ideal platform for evaluation of shape mean estimation, allowing for intuitive visual assessment of results. 
We consider three different such datasets for investigating the performance of elastic full Procrustes mean shape estimation 
and comparing it to other mean concepts: 
1. \texttt{digit3.dat} from \citet{DrydenMardia2016ShapeAnalysisWithApplications}, in R package \texttt{shapes}, comprising a total of 30 handwritten digits ``\textit{3}'' sampled at 13 landmarks each; 
2. irregularly sampled spirals $\beta(t) = t\,\exp(13\,\im\, t)$, $t\in[0,1]$, with random $n_i\in\{17, \dots, 22\}$ sampling points per spiral or with $n_i\in\{4, \dots, 7\}$ in a very sparse setting, additionally provided with small measurement errors and random rotation, translation and scaling;
and 3. handwritten letters ``\textit{f}'' extracted from the \texttt{handwrit} data in \citet{RamsaySilverman2005}, in R package \texttt{fda}, comprising 20 repetitions of the letter with a total of 501 samples per curve.
While we focus on one letter here for simplicity, example fits on the entire ``\textit{fda}'' writings can be found in Figure S1 in the Online Supplement.

Based on \texttt{digit3.dat}, 
we compare our elastic full Procrustes mean estimator $\hat{\mu}_{\mathcal{E}}$ with its inelastic analog $\hat{\mu}_{\not \mathcal{E}}$ and with an elastic curve mean estimator $\hat{\mu}_{\mathcal{C}}$ taking shape invariances not into account (fitted with R package \texttt{elasdics}). 
Moreover, we investigate fitting performance of $\hat{\mu}_{\mathcal{E}}$ for $n=4, 10, 30$ observed digits in a  simulation. 
All estimators are fitted using piece-wise constant and piece-wise linear B-splines with 13 equally spaced knots on \SRV-level applying 2nd order difference penalties in the covariance estimation for  $\hat{\mu}_{\mathcal{E}}$ and $\hat{\mu}_{\not \mathcal{E}}$. No penalty is available for $\hat{\mu}_{\mathcal{C}}$.
Figure \ref{fig:digitspirals} shows the estimates fitted on the first $n=4$ digits in the dataset. Without warping alignment, $\hat{\mu}_{\not \mathcal{E}}$ does not capture the pronounced central nose in the digit ``\textit{3}'' as distinctly as $\hat{\mu}_{\mathcal{E}}$. The difference is somewhat smaller when fitting on all $n=30$ digits (not shown), yet only marginally.
Since the data is roughly rotation and scaling aligned, $\hat{\mu}_{\mathcal{C}}$ is very close to $\hat{\mu}_{\mathcal{E}}$ when fitting on all digits. 
When fitting only on the first $n=4$ digits in the data, however, $\hat{\mu}_{\mathcal{C}}$ substantially deviates, in particular for the smooth estimator using linear splines, as shown in Figure \ref{fig:digitspirals} (top left). 
This can presumably be attributed to a) $\hat{\mu}_{\mathcal{C}}$ being more affected by the one outlying ``\textit{3}'' (top-left) than $\hat{\mu}_{\mathcal{E}}$, and b) the nose pointing into different directions depending on the handwriting.
Overall, deficiencies in warping and rotation alignment tend to mask features in the curve shapes by averaging over different orientations and parameterizations, similarly to the effect of measurement error in covariates in a regression model. 
With missing scale alignment, the shape of the estimated mean is mainly driven by the shape of the largest curve(s) in the data.
Good estimation quality is also confirmed in simulations that compare  elastic full Procrustes mean estimates $\hat{\mu}_l$, $l=1,\dots, 101$, estimated on independently drawn bootstrap samples of the digits (with $n=4, 10, 30$), with the mean $\mu$ estimated on the original dataset and taken as true mean. 
While single mean estimates for as few curves as $n=4$ might considerably deviate, the majority  visually resembles $\mu$ well, including $\hat{\mu}_{(0.75)}$ where $\hat{\mu}_{(a)}$ denotes the bootstrap estimator with $d_{(a)}$ the $a$-quantile of the distances $d_l = d_\mathcal{E}([\hat{\mu}_{l}], [\mu])$, $l=1,\dots,101$. 
Except for two outliers, all estimates with $n=10$ and $n=30$ are better than $\hat{\mu}_{(0.75)}$ for $n=4$ (Figure \ref{fig:digitspirals}, top middle). 

\begin{figure}
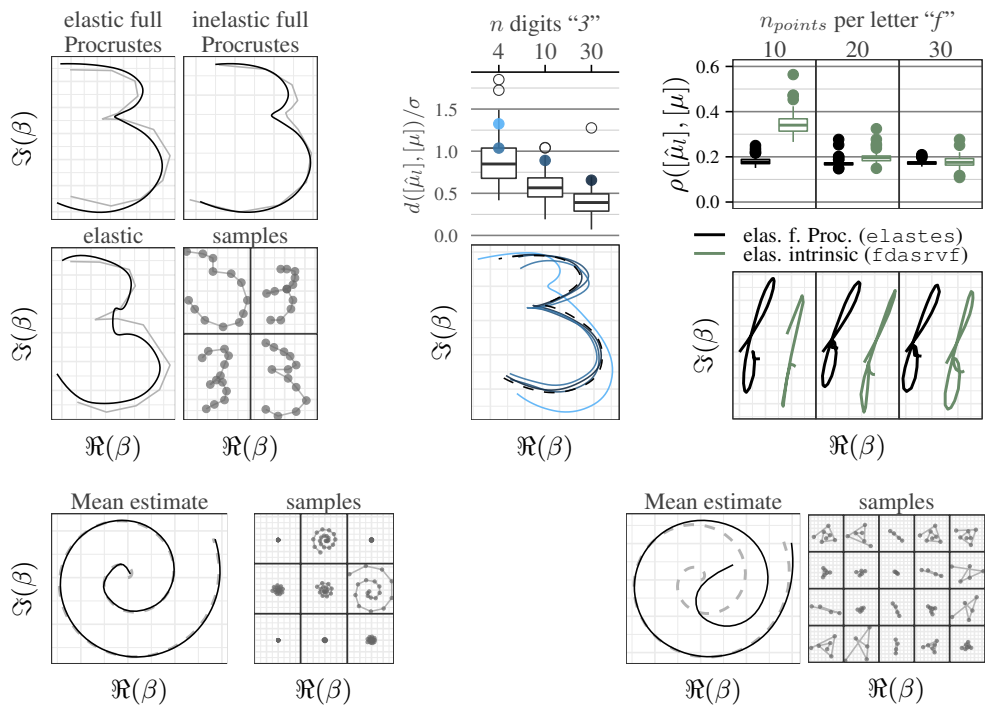

	\input{figures/digits.tex}
	\input{figures/digitsim.tex}
	\input{figures/f_comparison_dists}
	
	\vspace{-.3cm}
	\input{figures/spirals.tex}
	
	\caption{
		\textit{Top left:} Different digit ``\textit{3}'' mean curves (\textit{black}: order 1, \textit{grey}: order 0 B-splines on \SRV-level) estimated on the first $n=4$ sample polygons in \texttt{digit3.dat} shown in the bottom-right. 
		\textit{Top center:} Simulation results from 101-fold bootstrap samples of different sample sizes on \texttt{digit3.dat}. 
		Four bootstrap estimates as examples of cases with relatively high deviations from $\mu$ (95\% and for $n=4$ also 75\% distance quantiles) are depicted in the bottom and marked in the top panel (\textit{filled dots}).
		Here, distances to $[\mu]$ are provided relative to the standard deviation $\sigma$ estimated on the original dataset (as described below in Section \ref{sec:tongues}). However, in some sense, $\sigma$ is an underestimate as it does not include variation induced by irregular/sparse sampling.
		\textit{Top right:} Performance comparison of our elastic full Procrustes mean and the \texttt{fdasrvf} elastic intrinsic mean estimator based on 101-fold boostrap with $n_{points} = 10, 20, 30$ points sampled per letter ``\textit{f}''.
		Top shows the distribution of geodesic distances of estimated means to the overall intrinsic mean ``\textit{f}'' $[\mu]$ (computed with \texttt{fdasrvf}). For \texttt{fdasrvf}, three outliers for $n_{points} = 10$ and one for $n_{points} = 20$ above $0.6$ are omitted for the sake of visibility.    
		Bottom shows example means of median geodesic distance in each setting.
		\textit{Bottom:} Elastic full Procrustes means estimated on the spiral samples polygons displayed to their right, 
		in front of the original spiral (\textit{grey, dashed line}).
		\label{fig:digitspirals}
	}
\end{figure}

We illustrate the role of sparsity in shape mean estimation in the spiral data with its varying level of detail over the curve (i.e.\ varying curvature) and random irregular grids sampled roughly at constant angle distances (Figure \ref{fig:digitspirals}, bottom). Elastic full Procrustes mean estimates are based on piece-wise linear splines on \SRV-level with 20 knots and 2nd order penalties in covariance smoothing.
With a moderate number of sample points $n_i\in\{17, \dots, 22\}$, the estimate based on $n=9$ curves regains the original spiral shape close to perfectly. Only the inner end of the spiral with the most curvature shows some deviation.
With $n_i\in\{4,\dots,7\}$ 
and $n=20$, the estimator does not capture the higher curvature in the inner part of the spiral but otherwise fits its shape well despite extreme sparsity. 
In sparse functional data analysis, borrowing of strength across curves allows for consistent estimation of principle components based on a minimum number of sampling points $n_i$ for each curve under mild conditions \citep{Yao:etal:2005}.
However, this cannot equally be expected under shape invariances, as indicated by the fact that no shape information remains when curves are observed at $n_i < 3$ points, and in particular when warping-alignment can only be approximated on sparse samples.  
Still, we observe that bias becomes vanishingly small when the sampling points cover the curve sufficiently well. As this is often the case in real data, 
elastic full Procrustes mean estimation performs reliably well in practice already for comparably sparse data in our experience.

Based on $n=20$ handwritten letters ``\textit{f}'', we compare to R package \texttt{fdasrvf} \citep{fdasrvf}, which offers state-of-the-art elastic (intrinsic, not full Procrustes) shape mean estimation for regularly and densely observed curves.
To test different degrees of sparsity, we consider three scenarios with $n_{points} = 10, 20, 30$ sampling points per curve. For each, we draw $l = 1, \dots, 101$ bootstrap samples with $n_1 = \dots = n_{20} = n_{points}$ points subsampled from the total recorded points of each ``\textit{f}'' giving a higher acceptance probability to points important for curve reconstruction. This leads to datasets of sparse but still recognizable letters.
For all three settings our elastic full Procrustes mean estimator is fitted using piece-wise constant B-splines with 30 equally spaced knots on \SRV-level and applying a 2nd order difference penalty in the covariance estimation. This leads to polygonal means on curve level as in \texttt{fdasrvf} where the number of knots is, unlike in our approach, always equal to $n_{points}$.
As they estimate a different, intrinsic shape mean based on the elastic geodesic shape distance $\rho$, a fair comparison is not possible. We thus tailor the comparison to favor \texttt{fdasrvf} by comparing (also our full Procrustes) to their intrinsic shape mean on the full data, and using their  distance $\rho$.
Figure \ref{fig:digitspirals} (top right) illustrates performance based on their ``true mean'' $[\mu]$, estimated on the complete original data. 
In the very sparse $n_{points} = 10$ setting, differences in the mean concept are clearly dominated by the gain of using our mean estimator, which shows stable estimates gradually improving with $n_{points}$. 
With more densely observed curves the differences in fitting performance become smaller and the \texttt{fdasrvf} implementation gains a distinct computational advantage due to quadratic increase of the design matrix dimension in Hermitian covariance smoothing. 
While also in the $n_{points}=30$ scenario fitting time remains below 1.5 minutes on a standard computer, it can dramatically increase with the numbers of knots and sampling points. 
In dense scenarios, we, thus, recommend utilizing an alternative covariance estimator for elastic full Procrustes mean estimation as described in the Online Supplement. Still, also in this denser setting, our approach  estimating the elastic full Procrustes mean is at least as good in recovering the  elastic intrinsic mean as  \texttt{fdasrvf} which is, unlike our estimator, designed to estimate this mean.

\section{Phonetic analysis of tongue shapes}\label{sec:tongues}

The modulation of tongue shape presents an integral part of articulation \citep{hoole1999lingual}. Several authors investigate the shape variation in different phonetic tasks by analyzing tongue surface contours during speech production \citep{stone2001tongueSurfaceContours, iskarous2005patterns, davidson2006tongueshapes} to obtain insights into speech mechanics. They model tongue contour shapes with (penalized) B-splines fitted through points marked on the tongue surface in ultrasound or MRT images of the speaker profile. While different measures to register/superimpose the tongue contour curves are undertaken, shape and warping invariances are not explicitly incorporated into their statistical analysis so far. In particular, reducing tongue shapes to one dimensional curves over an angle as in \cite{davidson2006tongueshapes} brings the problem that the different functions (due to different tongue shapes for different sounds) extend over different angle domains, which is ignored in the analysis.
We suggest elastic full Procrustes analysis to appropriately handle the inherently two-dimensional curves. This approach accounts for the lack of a coordinate system in the ultrasound image, different positioning of ultrasound devices and size differences of speakers (Procrustes analysis) as well as flexibility of the tongue muscle to adjust its shape (elastic analysis).  We illustrate the approach in experimental data kindly provided by Marianne Pouplier: tongue contour shapes are recorded in an experimental setting
from six native German speakers ($\mathcal{S} = \{1,\dots,6\}$) repeating the same set of fictitious words, such as ``$\mathtt{pada}$", ``$\mathtt{pidi}$", ``$\mathtt{pala}$" or ``$\mathtt{pili}$". The words implement different combinations of two flanking vowels in $\mathcal{V}=\{\mathtt{a*a}, \mathtt{i*i}\}$ around a consonant in $\mathcal{C} = \{\mathtt{d}, \mathtt{l}, \mathtt{n}, \mathtt{s}\}$.
Each combination is repeated multiple times by each of the speakers (1-8 times), observing tongue contour shapes formed at the central time point of consonant articulation (estimated from the acoustic signal). 
In total, this yields $n=299$ sample polygons with nodes $\mathbf{b}_i\in\mathbb{C}^{n_i}$, $i=1,\dots, n$, each sampled at $n_i = 29$ points from the tongue root to the tongue tip. 
A feature vector $X_i = (v_i, c_i, s_i)^\top \in \mathcal{X} = \mathcal{V} \times \mathcal{C} \times \mathcal{S}$ identifies the word-speaker combination of the $i$th curve.
We investigate the different sources of shape variability (consonants, vowel context, speakers, repetitions) by elastic Full Procrustes analysis on different levels of hierarchy. 
Let $[\hat{\mu}_\mathcal{A}] \in \mathfrak{B}$ denote the elastic full Procrustes mean estimated for all $i$ with $X_i\in\mathcal{A} \subset \mathcal{X}$. 
Figure \ref{fig:tongues} depicts the overall shape mean $[\hat{\mu}_\mathcal{X}]$, separate means $[\hat{\mu}_{\{(c,v)\} \times \mathcal{S}}]$ for the consonants $c\in \{\mathtt{d}, \mathtt{s}\}$ in both vowel contexts $v\in\mathcal{V}$, and speaker-word means $[\hat{\mu}_{\{(c,v,s)\}}]$ reflecting individual articulation by speaker $s\in\mathcal{S}$. 
Not displayed consonants ``$\mathtt{l}$'' and ``$\mathtt{n}$'' yield very similar shapes as ``$\mathtt{d}$''. 
Shape means are estimated using linear B-splines on \SRV\ level with 13 equidistant knots and a 2nd order difference penalty for the basis coefficients. 
Homogeneous measurement error variance is assumed. 
Fitting the overall mean in this setting takes about 3 minutes on a standard computer. 

\begin{figure}[t]
	\centerline{
\begingroup%
  \makeatletter%
  \providecommand\color[2][]{%
    \errmessage{(Inkscape) Color is used for the text in Inkscape, but the package 'color.sty' is not loaded}%
    \renewcommand\color[2][]{}%
  }%
  \providecommand\transparent[1]{%
    \errmessage{(Inkscape) Transparency is used (non-zero) for the text in Inkscape, but the package 'transparent.sty' is not loaded}%
    \renewcommand\transparent[1]{}%
  }%
  \providecommand\rotatebox[2]{#2}%
  \newcommand*\fsize{\dimexpr\f@size pt\relax}%
  \newcommand*\lineheight[1]{\fontsize{\fsize}{#1\fsize}\selectfont}%
  \ifx\svgwidth\undefined%
    \setlength{\unitlength}{108bp}%
    \ifx\svgscale\undefined%
      \relax%
    \else%
      \setlength{\unitlength}{\unitlength * \real{\svgscale}}%
    \fi%
  \else%
    \setlength{\unitlength}{\svgwidth}%
  \fi%
  \global\let\svgwidth\undefined%
  \global\let\svgscale\undefined%
  \makeatother%
  \begin{picture}(1,2)%
    \lineheight{1}%
    \setlength\tabcolsep{0pt}%
    \put(0,0){\includegraphics[width=\unitlength,page=1]{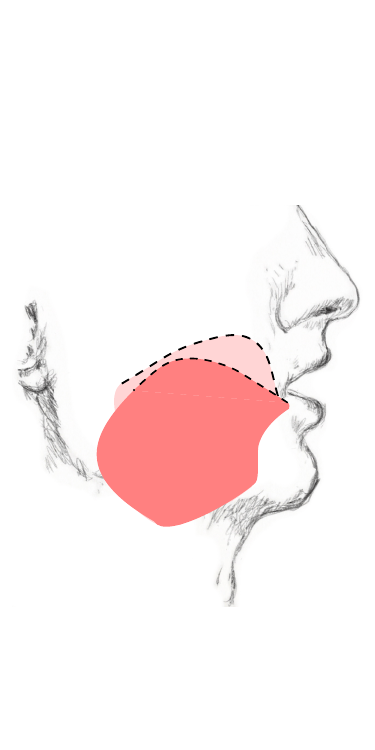}}%
    \put(0.45625474,0.97467856){\color[rgb]{0.22745098,0.22745098,0.22745098}\makebox(0,0)[lt]{\lineheight{1.25}\smash{\begin{tabular}[t]{l}\footnotesize$\mathtt{asa}$\end{tabular}}}}%
    \put(0.57462973,1.03862534){\color[rgb]{0.22745098,0.22745098,0.22745098}\makebox(0,0)[lt]{\lineheight{1.25}\smash{\begin{tabular}[t]{l}\footnotesize$\mathtt{isi}$\end{tabular}}}}%
    \put(0.10785462,0.57991296){\color[rgb]{0.22745098,0.22745098,0.22745098}\makebox(0,0)[lt]{\lineheight{0.85000002}\smash{\begin{tabular}[t]{l}\small tongue \\\quad \small muscle\end{tabular}}}}%
  \end{picture}%
\endgroup%

		\input{figures/tongues.tex}
	} 
	
	\caption{\label{fig:tongues}
		\textit{Left}: schematic illustrating the tongue muscle modulation when pronouncing ``$\mathtt{isi}$'' and ``$\mathtt{asa}$''. 
		Dashed lines correspond to the respective mean shapes in the right plot. With its multiple and multi-directional fibers, the tongue muscle almost fills the entire oral cavity and can flexibly adjust its shape. 
		In particular, not only tongue tip but also tongue root can move relatively freely. 
		\textit{Right}: elastic full Procrustes mean tongue shape estimates for different levels of aggregation. Tongue shapes are depicted in Bookstein coordinates, i.e.\,  with the tongue roots at $\beta(0) = 0$ and the tongue tips at $\beta(1) = 1$. Each panel shows the overall mean shape in the dataset \textit{(light gray, thick long-dashed line)}, the vowel-consonant mean shape \textit{(black, dashed line)}, and speaker-wise mean shapes \textit{(dark gray, solid lines)} for each combination. In each panel, original sample polygons \textit{(light red, thin lines, dots at sample points)} are added for the speaker with most intra-speaker variation (which is the same speaker except for ``$\mathtt{idi}$''). 
	}
\end{figure}

For quantitative assessment of the hierarchical variation structure, we consider the conditional variances $\sigma_{\mathcal{A}}^2 = \mathbb{E} (d^2_{\mathcal{E}}([B], [\mu_{\mathcal{A}}])\mid X \in \mathcal{A})$ 
with $X$ constrained on a subset $\mathcal{A} \subset \mathcal{X}$. Motivated by $\sigma_{\mathcal{A}}^2 = 1 - \lambda_{\mathcal{A}, 1}$ (Proposition \ref{lem:ifullProcMean} \ref{lem:ifullProcMean:eigen}) with $\lambda_{\mathcal{A}, 1}$ the largest eigenvalue of the respective conditional covariance operator, 
we estimate  $\hat{\sigma}^2_{\mathcal{A}} = 1-{\hat{\lambda}_{\mathcal{A},1}}{(\sum_{k=1}^m\hat{\lambda}_{\mathcal{A},k})^{-1}}$ with $\hat{\lambda}_{\mathcal{A},1}, \dots, \hat{\lambda}_{\mathcal{A},m}$ the positive eigenvalues of the covariance operator obtained in the final iteration of estimating $[\mu_{\mathcal{A}}]$. In a dense setting, where observations can be exactly normalized, the estimator $\check{\sigma}^2_{\mathcal{A}} = 1-\hat{\lambda}_{\mathcal{A},1}$ can be used directly, since when $\|Q\|=1$ almost surely also $\mathbb{E}(\|Q\|^2) = \sum_{k\geq1} \lambda_k = 1$. In a sparse setting, however, dividing by $\sum_{k=1}^m\hat{\lambda}_{\mathcal{A},k}$ in $\hat{\sigma}^2_{\mathcal{A}}$ ensures non-negative variance estimates.

In analogy to standard analysis of variance, we define the coefficient of determination for $\mathcal{A}_1$ in some decomposition $\mathcal{A}_1 \times \mathcal{A}_2 = \mathcal{X}$ as $\operatorname{R}^2_{\mathcal{A}_1} = 1 - {(|\mathcal{X}|\, \hat{\sigma}^2_{\mathcal{X}})^{-1}}{|\mathcal{A}_2| \sum_{a \in \mathcal{A}_1} \hat{\sigma}^2_{\{a\}\times\mathcal{A}_2}}$ reflecting the variance reduction achieved by conditioning on the features in $\mathcal{A}_1$.
Inspecting these measures underpins the visual impression from Figure \ref{fig:tongues}: although the tongue movement is induced by consonant pronunciation, the vowel context appears more dispositive for the tongue shape during articulation explaining more than half of the total variation ($\operatorname{R}^2_{\mathcal{V}} = 0.68$, $\operatorname{R}^2_{\mathcal{C}} = 0.11$), which increases only to $\operatorname{R}^2_{\mathcal{V}\times\mathcal{C}} = 0.73$ when also distinguishing consonants. Comparing the different vowel contexts, we observe nearly double variation for $\mathtt{a*a}$ than for $\mathtt{i*i}$ with ${\hat{\sigma}^2_{\{\mathtt{a*a}\}\times\mathcal{C}\times\mathcal{S}}}/{\hat{\sigma}^2_{\{\mathtt{i*i}\}\times\mathcal{C}\times\mathcal{S}}} = 1.95$, which might potentially relate to different pronunciations of ``$\mathtt{a}$" in German dialects.   
When considering single word articulation of a speaker ($\operatorname{R}^2_{\mathcal{V}\times\mathcal{C}\times\mathcal{S}} = 0.93$) about $7$ percent of the variation remain as residual variance, indicating that, while there is still non-negligible intra speaker variation, the inter speaker variance is considerably higher.

Recorded via ultrasound images, the shape of tongue surface contours modulo the respective invariances presents a natural object of analysis. Yet, if suitable reference landmarks allowed, the information on positioning, size, orientation and warping of the curve could also be separately investigated.  

\section{Discussion}\label{sec5}

While we find good performance of the proposed elastic full Procrustes mean estimator in realistic irregular/sparse curve data, future work should focus on theoretical assessment of estimation quality as well as inference. 
In particular, evaluation of the bias introduced by sub-optimal alignment of curves based on single discrete measurements is a topic of its own that would be of interest, as well as characterization of suitable sampling schemes where the bias is empirically negligible, 
which often  appears to be the case in practice.

In this paper, we focus on open rather than closed curves, since the presented covariance-based estimation approach is particularly natural in this case. Constraining curves $\beta$ to be closed, i.e.\ $\beta(0) = \beta(1)$, induces the non-linear constraint $\int_0^1 q(t) |q(t)| \,dt = 0$ on \SRV-level, which prevents direct application of Proposition \ref{lem:ifullProcMean} in the estimation. 
To still obtain a closed mean estimator for closed observations, the presented estimator could be closed along the lines of \citet[Chapter 10.6.2]{SrivastavaKlassen2016} either post-hoc or in each step of the fitting algorithm. 

As it can be analytically computed, inelastic full Procrustes analysis can also serve as a good starting point for estimating other types of shape means of plane curves. In addition, the estimated covariance structure supports estimation of inner products in sparse/irregular data scenarios, which are involved also in estimation of, e.g., other types of shape means. The presented results thus have relevance beyond the estimation of the (elastic) full Procrusted mean for plane shapes.


\backmatter


\section*{Acknowledgements}

We sincerely thank Marianne Pouplier and Philip Hoole for providing their carefully recorded phonetic tongue shape data and Paula Giesler and Sophia Schaffer for their help in understanding and visualizing its anatomical background. We gratefully acknowledge funding by grant GR 3793/3-1 from the German research foundation (DFG).\vspace*{-8pt}


%
 \bibliographystyle{biom} 
 \bibliography{literature}

%
%
%




\appendix

\newpage
\section{Hermitian covariance smoothing} 

\subsection{Complex processes and rotation invariance}

In the following, we detail prerequisites on linear operators and proof Theorem \ref{lem:bivariateCov} and \ref{lem:bivariateCov_cplxsymmetry}. Subsequently, Proposition \ref{prop:bivariateCov} substantiates the relation of complex and real covariance surfaces indicated in the main manuscript.


We widely follow \citet{Hsing2015TheoreticalFoundations} in their introduction of functional data fundamentals, but re-state required statements underlying Section 
\ref{sec:covariance:complex_processes} 
for the complex case, since they nominally focus on real Hilbert spaces. Moreover, we give a Bochner integral free definition of mean elements and covariance operators  to avoid introduction of additional notions. 

Let $\mathbb{H}$ denote a Hilbert space over $\mathbb{C}$ or $\mathbb{R}$.

\begin{theorem}
	\label{thm:eigenbasis}
	Let $\Omega$ be a compact self-adjoint operator on $\mathbb{H}$. 
		Then there exists a sequence of countably many real eigenvalues $\lambda_1, \lambda_2, \dots \in \mathbb{R}$ of $\Omega$ with corresponding orthogonal eigenvectors $e_1, e_2, \dots \in \mathbb{H}$ and $\lambda_1 \geq \lambda_2 \geq \dots$ such that $\{e_k\}_k$ (called eigenbasis of $\Omega$) is an orthonormal basis of the closure $\overline{\Omega(\mathbb{H})}$ of the image of $\Omega$ and for every $x \in \mathbb{H}$ 
		$$ \Omega(x) = \sum_{k\geq1} \lambda_k \langle e_k, x \rangle e_k.$$
	\end{theorem}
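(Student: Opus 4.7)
The plan is to follow the classical proof of the spectral theorem for compact self-adjoint operators, checking that every step transfers verbatim from the real to the complex case. Two preliminaries are immediate: eigenvalues must be real because $\lambda\langle v,v\rangle = \langle \Omega v, v\rangle = \langle v, \Omega v\rangle = \bar\lambda\langle v,v\rangle$ for any eigenpair $(\lambda,v)$ with $v\neq 0$; and eigenvectors corresponding to distinct eigenvalues are orthogonal by the analogous two-sided manipulation of $\langle \Omega u, v\rangle$. These observations ensure that the eigenbasis to be constructed is automatically orthogonal with real spectrum.

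The heart of the argument is to produce a first eigenvector attaining the operator norm. I would first establish the variational identity $\|\Omega\| = \sup_{\|x\|=1}|\langle \Omega x,x\rangle|$ for self-adjoint $\Omega$, which extends to the complex case via polarization of the Hermitian form $(x,y)\mapsto\langle \Omega x, y\rangle$. Choosing $x_n$ with $\|x_n\|=1$ and $\langle \Omega x_n,x_n\rangle\to\lambda_1\in\{\pm\|\Omega\|\}$, compactness of $\Omega$ yields a subsequence along which $\Omega x_n$ converges, and the inequality $\|\Omega x_n - \lambda_1 x_n\|^2 \leq 2\|\Omega\|^2 - 2\lambda_1\langle \Omega x_n,x_n\rangle \to 0$ implies (assuming $\Omega\neq 0$; otherwise the statement is trivial) that $x_n$ itself converges to a unit eigenvector $e_1$ with eigenvalue $\lambda_1$.

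The sequence is then built by iteration. Since $\Omega$ is self-adjoint, each $V_n := \{e_1,\dots,e_n\}^\perp$ is $\Omega$-invariant and the restriction $\Omega_n := \Omega|_{V_n}$ is again compact and self-adjoint with $\|\Omega_n\|\leq\|\Omega_{n-1}\|$, so the construction above applied to $\Omega_n$ produces $e_{n+1},\lambda_{n+1}$. Either the process terminates when some $\Omega_n=0$, giving a finite orthonormal sequence, or it yields an infinite one; in the latter case compactness forces $\lambda_n\to 0$, since otherwise $\Omega e_n=\lambda_n e_n$ along a subsequence with $|\lambda_n|\geq c>0$ would admit no Cauchy subsequence despite $\{e_n\}$ being bounded.

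To conclude, I would decompose an arbitrary $x\in\mathbb{H}$ as $x = \sum_k \langle e_k, x\rangle e_k + r$ with $r$ orthogonal to every $e_k$. By construction $r$ lies in every $V_n$, on which $\|\Omega\|$ equals $|\lambda_{n+1}|\to 0$, so $\Omega r = 0$ and hence $\Omega x = \sum_k \lambda_k\langle e_k,x\rangle e_k$ by continuity of $\Omega$. For the basis statement, self-adjointness of $\Omega$ gives $\overline{\Omega(\mathbb{H})} = \ker(\Omega)^\perp$; the decomposition shows $r\in\ker(\Omega)$ and each $e_k$ with nonzero $\lambda_k$ lies in $\Omega(\mathbb{H})$, so $\{e_k\}_k$ forms an orthonormal basis of $\overline{\Omega(\mathbb{H})}$. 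I expect the only delicate step to be the variational identity in the complex setting, where positivity arguments available over $\mathbb{R}$ are replaced by polarization; everything else is a routine adaptation.
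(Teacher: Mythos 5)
Your argument is correct and is the standard spectral-theorem proof for compact self-adjoint operators (variational characterization of $\|\Omega\|$, extraction of an extremal eigenvector via compactness, iteration on orthogonal complements, $\lambda_n \to 0$, and identification of $\overline{\Omega(\mathbb{H})}$ with $\ker(\Omega)^\perp$); the complex-case adaptations you flag, namely reality of the eigenvalues and polarization for the norm identity, go through exactly as you describe. The paper gives no proof of its own but simply cites Rynne and Youngson, Chapter 7.3, which presents essentially this same construction, so your approach coincides with the one the paper relies on.
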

	\begin{proof}
		Compare \citet{Rynne2007linearFuncAna}, Chapter 7.3.
	\end{proof}
	
	\begin{definition}
		Let $Y$ be a random element in $\mathbb{H}$ with $\mathbb{E}\left( \|Y\|^2 \right) < \infty$. Then 
		\begin{enumerate}[i)]
			\item the mean element $\mu \in \mathbb{H}$ of $Y$ is defined by $\langle f, \mu \rangle = \mathbb{E}\left( \langle f, Y \rangle \right)$ for all $f\in\mathbb{H}$.
			\item the covariance operator $\Sigma: \mathbb{H} \rightarrow \mathbb{H}$ of $Y$ is defined by $ \langle \Sigma(e) , f \rangle = \mathbb{E}\left( \langle Y - \mu, f \rangle \langle e, Y - \mu \rangle \right) $ for all $e,f \in \mathbb{H}$.
		\end{enumerate}
	\end{definition}
	
	\begin{proposition} \label{lem:covop} Consider $\mu$ and $\Sigma$ as above.
		\begin{enumerate}[i)]
			\item $\mu$ and $\Sigma$ are well-defined.
			\item $\Sigma$ is a nonnegative-definite (thus self-adjoint), trace-class and, hence, also compact linear operator. 
		\end{enumerate}
	\end{proposition}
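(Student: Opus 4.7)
The plan is to build everything on Riesz representation together with Cauchy--Schwarz, and then to read off the operator-theoretic properties of $\Sigma$ from the variational identity $\langle \Sigma(e), e\rangle = \mathbb{E}\,|\langle e, Y-\mu\rangle|^2$.

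For part (i), I first check that $f \mapsto \mathbb{E}\langle f, Y\rangle$ is a bounded (conjugate-)linear functional on $\mathbb{H}$: (conjugate-)linearity is immediate from linearity of expectation and of the inner product, and boundedness follows from $|\mathbb{E}\langle f, Y\rangle| \leq \|f\|\,\mathbb{E}\|Y\| \leq \|f\|\,\sqrt{\mathbb{E}\|Y\|^2}$ by Cauchy--Schwarz in $\mathbb{H}$ and in $L^2(\mathbb{P})$. Riesz representation then produces the unique $\mu \in \mathbb{H}$. Because $\|\mu\| \leq \sqrt{\mathbb{E}\|Y\|^2} < \infty$, we also get $\mathbb{E}\|Y-\mu\|^2 < \infty$. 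Fixing $e$ and applying the same strategy to $f \mapsto \mathbb{E}\bigl(\langle Y-\mu, f\rangle\,\langle e, Y-\mu\rangle\bigr)$ yields $\Sigma(e)$; one further Cauchy--Schwarz in $L^2(\mathbb{P})$ bounds the expression by $\|e\|\,\|f\|\,\mathbb{E}\|Y-\mu\|^2$. Linearity of $e \mapsto \Sigma(e)$ then follows from linearity of the defining expression in the $e$-slot and linearity of the inverse Riesz map.

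For part (ii), nonnegative-definiteness is exactly $\langle \Sigma(e), e\rangle = \mathbb{E}\,|\langle e, Y-\mu\rangle|^2 \geq 0$. Self-adjointness I would verify directly, to cover the real case as well (where the positivity of the quadratic form alone does not entail symmetry): in the complex case, conjugating the defining identity and invoking $\overline{\langle a,b\rangle} = \langle b,a\rangle$ together with commutativity of scalar multiplication inside the expectation gives $\overline{\langle \Sigma(f), e\rangle} = \langle \Sigma(e), f\rangle$, hence $\langle e, \Sigma(f)\rangle = \langle \Sigma(e), f\rangle$; the real case is the same manipulation without the conjugation. For trace-class, I pick an arbitrary orthonormal basis $\{e_k\}_k$ and combine Tonelli with Parseval to get $\sum_k \langle \Sigma(e_k), e_k\rangle = \mathbb{E}\sum_k |\langle e_k, Y-\mu\rangle|^2 = \mathbb{E}\|Y-\mu\|^2 < \infty$, which is basis-independent and therefore the trace.

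To secure compactness without appealing to Theorem \ref{thm:eigenbasis} (which takes compactness as a hypothesis, so the appeal would be circular), I show that $\Sigma$ is Hilbert--Schmidt and hence a norm limit of finite-rank operators. A single Cauchy--Schwarz on the expectation yields $\sum_{k,j} |\langle \Sigma(e_k), e_j\rangle|^2 \leq \bigl(\mathbb{E}\|Y-\mu\|^2\bigr)^2 < \infty$, so the finite-rank truncations $\Sigma_n(x) = \sum_{k,j \leq n} \langle e_k, x\rangle\,\langle \Sigma(e_k), e_j\rangle\,e_j$ converge to $\Sigma$ in Hilbert--Schmidt norm, and therefore in operator norm since $\|\cdot\|_{\mathrm{op}} \leq \|\cdot\|_{\mathrm{HS}}$. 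The main obstacle is not any single estimate---each is Cauchy--Schwarz applied once or twice---but rather keeping the order of implications clean: establish boundedness before invoking Riesz, confirm $\mathbb{E}\|Y-\mu\|^2 < \infty$ before iterating, and obtain compactness by way of the Hilbert--Schmidt route rather than through the spectral theorem that needs it as input.
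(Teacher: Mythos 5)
Your proof is correct, and for parts (i) and the nonnegativity/self-adjointness/trace computations in (ii) it follows essentially the same path as the paper: bound the relevant functional (the paper uses Jensen where you use Cauchy--Schwarz, an immaterial difference), invoke Riesz representation for existence and uniqueness, verify $\langle \Sigma(e),f\rangle = \langle e,\Sigma(f)\rangle$ and $\langle \Sigma(e),e\rangle = \mathbb{E}\,|\langle e, Y-\mu\rangle|^2 \geq 0$ directly from the definition, and sum the diagonal against an orthonormal basis to get $\mathbb{E}\|Y-\mu\|^2 < \infty$ as the trace. You are also right to verify self-adjointness directly rather than deducing it from nonnegativity, since the parenthetical ``thus self-adjoint'' only holds over $\C$ and the proposition is stated for $\mathbb{H}$ over $\C$ or $\R$. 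The one place you genuinely diverge is the final step: the paper simply cites the standard fact that trace-class operators are compact, whereas you prove compactness from scratch by showing $\Sigma$ is Hilbert--Schmidt, $\sum_{k,j}|\langle \Sigma(e_k),e_j\rangle|^2 \leq (\mathbb{E}\|Y-\mu\|^2)^2$, and approximating by finite-rank truncations in Hilbert--Schmidt (hence operator) norm. Your stated motivation is slightly off target --- the paper does not derive compactness from Theorem \ref{thm:eigenbasis}, so there is no circularity to avoid --- but the payoff is real: your argument is self-contained and elementary, at the cost of a few extra lines, while the paper's is shorter but leans on an external result about trace-class operators.
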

	\begin{proof}
		\begin{enumerate}[i)]
			\item Since $\mathbb{E}\left( \|Y\|^2 \right) < \infty$, Jensen's inequality yields $\mathbb{E}\left( \|Y\| \right) < \infty$, and therefore $\mathbb{E}\left( \langle f, Y \rangle \right) < \infty$ and also $\mathbb{E}\left( \langle Y - \mu, f \rangle \langle e, Y - \mu \rangle \right) < \infty$ for all $e,f \in \mathbb{H}$. Uniqueness of $\mu$ and $\Sigma$ follows from the Riesz Representation Theorem.
			\item Set $\mu = 0$ without loss of generality. Self-adjointness $ \langle \Sigma(e) , f \rangle = \mathbb{E}\left( \langle Y, f \rangle \langle e, Y\rangle \right)  = \langle e , \Sigma(f) \rangle$ and nonnegative-definiteness  $ \langle \Sigma(e) , e \rangle = \mathbb{E}\left( \langle Y, e \rangle \langle e, Y \rangle \right) = \mathbb{E}\left( |\langle e, Y \rangle|^2 \right)$ immediately follow from the definition. $\Sigma$ is trace-class, since for an orthonormal basis $\{e_k\}_k$ of $\mathbb{H}$ it holds that 
			$$\sum_k \langle \Sigma(e_k), e_k \rangle = \sum_k \mathbb{E}\left( |\langle e_k, Y \rangle|^2 \right) = \mathbb{E}\left(\| Y \|^2 \right) < \infty$$ as assumed in the definition. Trace-class operators are compact.
		\end{enumerate}
	\end{proof}
	
	\begin{corollary}
		The covariance operator $\Sigma$ of $Y$ with $\mathbb{E}(\|Y\|^2) <\infty$ 
		has an eigenbasis as descibed in Theorem \ref{thm:eigenbasis}.
	\end{corollary}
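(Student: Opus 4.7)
The plan is to invoke the two results just established and chain them together, since the corollary is essentially a one-line consequence.

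First, I would appeal to Proposition \ref{lem:covop}, which tells us two things about $\Sigma$: it is self-adjoint (in fact nonnegative-definite, which implies self-adjointness) and it is trace-class, hence compact. Thus the hypothesis of Theorem \ref{thm:eigenbasis} — namely that $\Omega$ is a compact self-adjoint operator on $\mathbb{H}$ — is satisfied by $\Omega = \Sigma$.

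Next, applying Theorem \ref{thm:eigenbasis} directly to $\Sigma$ yields a countable sequence of real eigenvalues $\lambda_1 \geq \lambda_2 \geq \dots$ with an orthonormal system of eigenvectors $\{e_k\}_k$ forming a basis of $\overline{\Sigma(\mathbb{H})}$, together with the spectral expansion $\Sigma(x) = \sum_{k \geq 1} \lambda_k \langle e_k, x \rangle e_k$ for every $x \in \mathbb{H}$. This is precisely what ``has an eigenbasis as described in Theorem \ref{thm:eigenbasis}'' means.

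There is no real obstacle here: the corollary is a bookkeeping consequence of the preceding results, included to make the existence of the Karhunen–Loève-style eigenbasis explicit and to have a named reference for it in later sections. The only thing worth remarking in the proof is that nonnegative-definiteness from Proposition \ref{lem:covop}(ii) already implies self-adjointness (so no separate verification is required), and that the eigenvalues are automatically nonnegative, which strengthens the generic conclusion of Theorem \ref{thm:eigenbasis} but is not needed for the statement.
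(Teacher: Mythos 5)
Your proposal is correct and follows exactly the paper's route: the paper's proof likewise cites Proposition \ref{lem:covop} for self-adjointness and compactness of $\Sigma$ and then applies Theorem \ref{thm:eigenbasis}. Your additional remarks on nonnegative-definiteness implying self-adjointness and on the eigenvalues being nonnegative are accurate but, as you note, not needed.
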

	\begin{proof}
		Immediately follows from Theorem \ref{thm:eigenbasis} and the self-adjointness and compactness of $\Sigma$ shown in Proposition \ref{lem:covop}.
	\end{proof}
	
	
	We proceed by proving Theorem \ref{lem:bivariateCov} and \ref{lem:bivariateCov_cplxsymmetry} in the main manuscript characterizing the relation of the covariance of a complex process $Y$ and the covariance of the corresponding bivariate real process $\mathbf{Y}$: 
	
	\begin{proof}[Theorem 1]
		For $x,y \in \mathbb{L}(\mathcal{T}, \mathbb{C})$ and assuming $\mu=0$ without loss of generality, 	
		$ \Re\left(\langle \Sigma(x) + \Omega(x), y \rangle)\right)
		= \Re\left(\mathbb{E}\left(  \langle x, Y \rangle \langle Y, y \rangle + \langle Y, x \rangle \langle Y, y \rangle \right) \right)
		=  \Re\left(\mathbb{E}\left( 2\,\Re\left( \langle Y, x \rangle \right)\langle Y, y \rangle \right)\right)
		=  2\, \mathbb{E}\left( \Re\left( \langle Y, x \rangle \right) \Re\left( \langle Y, y \rangle \right) \right)  
		= 2 \langle \mathbf{\Sigma} (\kappa(x)), \kappa(y) \rangle
		$. 
	\end{proof}
	
	\begin{proof}[Theorem 2]
		From complex symmetry of $\mathfrak{L}(Y)$ it follows that $\mathfrak{L}(\exp(\im\, \omega) Z_k) = \mathfrak{L}(\langle e_k, \exp(\im\, \omega)Y\rangle) = \mathfrak{L}(Z_k)$, $\langle \mu, f\rangle = \mathbb{E}(\langle Y, f\rangle) {=} \mathbb{E}[\langle -Y, f\rangle] = 0$, and $\langle \Omega(e), f\rangle = \mathbb{E}(\langle Y, e\rangle\langle Y, f\rangle) {=} \mathbb{E}(- \langle Y, e\rangle\langle Y, f\rangle) = 0$ for all $\omega, k,e,f$, which yields the first direction of the characterization via scores and, together with Theorem \ref{lem:bivariateCov}, 
		statement  \ref{lem:bivariateCov1}). 
		\ref{lem:bivariateCov2}) 
		follows from Theorem \ref{lem:bivariateCov}, 
		statement \ref{lem:bivariateCov1}) 
		and the fact that if $Z_k$ is complex symmetric, $\kappa(Z_k)$ has uncorrelated components with equal variance.
		Since $\exp(\im \omega) Y = \sum_{k\geq 1} \exp(\im \omega) Z_k e_k$ almost surely if $\mu=0$, the second direction of the characterization via scores follows.
	\end{proof}

	\begin{proposition}
		\label{prop:bivariateCov}
		Analogous to $\Sigma$, the bivariate covariance surface $\mathbf{C}(s,t)$ of $\mathbf{Y}=\kappa(Y)$ in $\mathbb{L}^2([0,1], \mathbb{R}^2)$ is characterized by the matrix of covariance and cross-covariance surfaces
		\begin{align*}\nonumber
			\mathbf{C}(s,t) &= \begin{pmatrix}
				\mathbb{E}\left(\Re\left(Y(s)\right)\Re\left(Y(t)\right)\right) & 
				\mathbb{E}\left(\Im\left(Y(s)\right)\Re\left(Y(t)\right)\right) \\
				\mathbb{E}\left(\Re\left(Y(s)\right)\Im\left(Y(t)\right)\right) & 
				\mathbb{E}\left(\Im\left(Y(s)\right)\Im\left(Y(t)\right)\right) \end{pmatrix} \\
			&= 
			\frac{1}{2} \begin{pmatrix}
				\Re\left(C(s,t) + R(s,t)\right) & \Im\left(R(s,t) - C(s,t)\right) \\
				\Im\left(C(s,t) + R(s,t)\right) & \Re\left(C(s,t) - R(s,t)\right)
			\end{pmatrix}
		\end{align*}
		determined by the pseudo-covariance surface $R(s,t) = \mathbb{E}\left(Y(s)Y(t)\right)$ in addition to the complex covariance surface $C(s,t)$. 
	\end{proposition}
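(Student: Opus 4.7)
The plan is a direct element-wise computation: express each entry of $\mathbf{C}(s,t)$ in terms of real bivariate second moments of $U=\Re(Y)$ and $V=\Im(Y)$, then re-express those moments in terms of $C$ and $R$ by taking real and imaginary parts. First I would, without loss of generality, assume $\mu = 0$ (otherwise replace $Y$ by $Y-\mu$ throughout, which leaves $C$ and $R$ and the matrix $\mathbf{C}$ mean-adjusted consistently). Writing $Y(t) = U(t) + \im V(t)$, the four entries of $\mathbf{C}(s,t)$ in the proposition are simply the four real cross-covariances $\mathbb{E}(U(s)U(t))$, $\mathbb{E}(V(s)V(t))$, $\mathbb{E}(U(s)V(t))$, $\mathbb{E}(V(s)U(t))$, so nothing is to be shown for the first equality; only the second needs justification.

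Second I would recall that the main manuscript defines $C(s,t) = \mathbb{E}(\overline{Y(s)}Y(t))$ and the proposition defines $R(s,t) = \mathbb{E}(Y(s)Y(t))$, and expand
\[
\overline{Y(s)}Y(t) = U(s)U(t) + V(s)V(t) + \im\bigl(U(s)V(t) - V(s)U(t)\bigr),
\]
\[
Y(s)Y(t) = U(s)U(t) - V(s)V(t) + \im\bigl(U(s)V(t) + V(s)U(t)\bigr).
\]
Taking expectations yields four linear identities expressing $\Re(C)$, $\Im(C)$, $\Re(R)$, $\Im(R)$ in terms of the four real cross-covariances.

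Third, this is in fact two uncoupled $2{\times}2$ linear systems (the real parts couple $\mathbb{E}(UU)$ and $\mathbb{E}(VV)$, the imaginary parts couple $\mathbb{E}(UV)$ and $\mathbb{E}(VU)$). Solving them gives
\[
\mathbb{E}(U(s)U(t)) = \tfrac{1}{2}\Re(C+R),\quad \mathbb{E}(V(s)V(t)) = \tfrac{1}{2}\Re(C-R),
\]
\[
\mathbb{E}(U(s)V(t)) = \tfrac{1}{2}\Im(C+R),\quad \mathbb{E}(V(s)U(t)) = \tfrac{1}{2}\Im(R-C),
\]
all evaluated at $(s,t)$. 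Substituting these into the matrix form of $\mathbf{C}(s,t)$ produces exactly the stated matrix.

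The only real obstacle is bookkeeping: one must not confuse $\mathbb{E}(U(s)V(t))$ with $\mathbb{E}(V(s)U(t))$ (they generally differ, and the asymmetry is precisely what forces $\Im(C+R)$ versus $\Im(R-C)$ into the two off-diagonal positions), and one must keep the arguments $s,t$ in the right order when reading off $C$ and $R$. To guard against sign errors, I would tabulate the four expansions side by side before reading off the inverse relations. No analytic difficulty arises; the result is a purely algebraic rewriting.
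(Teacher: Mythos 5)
Your proposal is correct and amounts to the same direct algebraic computation as the paper's proof: the paper simply forms $C(s,t)\pm R(s,t)$ first, using $Y^{\dagger}(s)+Y(s)=2\,\Re(Y(s))$ and $Y^{\dagger}(s)-Y(s)=-2\im\,\Im(Y(s))$, and reads off real and imaginary parts, whereas you expand $C$ and $R$ separately and invert the (trivially decoupled) linear system — the resulting identities are identical. Your sign bookkeeping, including $\mathbb{E}(V(s)U(t))=\tfrac{1}{2}\Im(R-C)$ versus $\mathbb{E}(U(s)V(t))=\tfrac{1}{2}\Im(C+R)$, checks out.
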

	
	\begin{proof}
		\begin{align*}
			C(s,t) + R(s,t) &=
			\mathbb{E}\left(Y^\dagger(s) Y(t) + Y(s) Y(t)\right) = 
			\mathbb{E}\left(\left(2\,\Re\left(Y(s)\right) + 0\right) Y(t)\right)\\ 
			&= 2\underbrace{\mathbb{E}\left(\Re\left(Y(s)\right)\Re\left(Y(t)\right)\right)}_{
				\frac{1}{2}\Re\left(C(s,t) + R(s,t)\right)
			} + 2\im\,\underbrace{\mathbb{E}\left(\Re\left(Y(s)\right)\Im\left(Y(t)\right)\right)}_{
				\frac{1}{2}\Im\left(C(s,t) + R(s,t)\right)
			}\\
			C(s,t) - R(s,t) &=
			\mathbb{E}\left(Y^\dagger(s) Y(t) - Y(s) Y(t)\right) = 
			\mathbb{E}\left(\left(0 - 2\im\,\Im\left(Y(s)\right) \right) Y(t)\right) \\
			&= -2\im\,\underbrace{\mathbb{E}\left(\Im\left(Y(s)\right)\Re\left(Y(t)\right)\right)}_{
				\frac{1}{2}\Im\left(R(s,t) - C(s,t)\right)
			} + 2\,\underbrace{\mathbb{E}\left(\Im\left(Y(s)\right)\Im\left(Y(t)\right)\right)}_{
				\frac{1}{2}\Re\left(C(s,t) - R(s,t)\right)
			}
		\end{align*}
		which shows the desired form. 
	\end{proof}

	\section{Elastic full Procrustes analysis}
	
	\subsection{Full Procrustes analysis in the square-root-velocity framework}
	
	In the following, we start by proving Proposition \ref{lem:ifullProcMean} and use Proposition \ref{lem:ifullProcMean}
	i) to show Proposition \ref{lem:efullProcDist} before proving Proposition \ref{lem:efullProcMean}
	subsequently.
	
	\begin{proof}[Proposition \ref{lem:ifullProcMean} i) and ii)]
		$d_{\not \mathcal{E}}$ defines a metric on $\tilde{\mathfrak{B}}$:
			\begin{align}\nonumber
				d^2_{\not \mathcal{E}}((\beta_1), (\beta_2)) &= \inf_{u\in \mathbb{C}} \|q_1 - u\, q_2 \|^2 = \inf_{u\in \mathbb{C}} \big[1 - \overbrace{u}^{=r_1 \exp(\im\, \omega_1)} \underbrace{\langle q_1, q_2\rangle}_{=r_2 \exp(\im\, \omega_2)} - u^\dagger \langle q_2, q_1\rangle +  |u|^2 \big]\\
				\nonumber &= \inf_{r_1>0,\, \omega_1\in\mathbb{R}} \big[1 - r_1 r_2 \exp(\im\, (\omega_1 + \omega_2)) - r_1 r_2 \exp(-\im\, (\omega_1 + \omega_2)) + r_1^2\big]\\
				 &= \inf_{r_1>0,\, \omega_1\in\mathbb{R}} \big[1 - 2 r_1 r_2 \cos(\omega_1 + \omega_2) + r_1^2\big] \overset{\omega_1 = -\omega_2}{=} \inf_{r_1>0} \big[1 - 2 r_1 r_2  + r_1^2 \big]\label{optimRot} \\ 
				&= \inf_{r_1>0} \big[1 - r_2^2  + (r_1 - r_2)^2\big] \overset{r_1 = r_2}{=} 1 - |\langle q_1, q_2\rangle|^2 = \|q_1 - \langle q_2, q_1\rangle q_2\|^2 \label{fPfit}
			\end{align}
			Clearly, $d_{\not \mathcal{E}}$ is well-defined (i.e., does not depend on the choice of $\beta_i \in (\beta_i)$), symmetric, positive. It is zero if and only if $|\langle q_2, q_1\rangle|=1$ and, hence, $(\beta_1) = (\int_0^t q_1(s) |q_1(s)|\,ds) = (\langle q_2, q_1 \rangle \int_0^t q_2(s) |q_2(s)|\,ds) = (\beta_2)$. To show the triangle inequality let $(\beta_3) \in \tilde{\mathfrak{B}}$ with $q_3 = \Psi(\beta_3)$ and $v^* = \langle q_2, q_1\rangle$. Then $d_{\not \mathcal{E}}((\beta_1), (\beta_3)) = \inf_{u\in \mathbb{C}} \|q_1 - u\, q_3 \| \overset{\mathbb{L}^2}{\underset{\text{tr. ineq.}}{\leq}} 
			\underbrace{\|q_1 - v^*\, q_2 \|}_{\overset{(\ref{fPfit})}{=} 
				\inf_{v\in \mathbb{C}} \|q_1 - v\, q_2 \|} + 
			\underbrace{\inf_{u\in \mathbb{C}} \|v^*\, q_2 - u\, q_3 \|}_{= |v^*| \inf_{u\in \mathbb{C}} \|q_2 - u\, q_3 \|} \overset{|v^*|\leq1}{\leq}
			d_{\not \mathcal{E}}((\beta_1), (\beta_2)) + d_{\not \mathcal{E}}((\beta_2), (\beta_3))$.
			This shows i). ii) directly follows from (\ref{optimRot}), since $\exp(-\im\omega_2) = \langle q_1, q_2\rangle / |\langle q_1, q_2\rangle|$.
	\end{proof}
	
	\begin{proof}[Proposition \ref{lem:ifullProcMean} iii)]
		$\min_{(\beta) \in \tilde{\mathfrak{B}}} \mathbb{E}\left( d^2_{\not \mathcal{E}}((\beta), (B))  \right) = \min_{y : \|y\|=1} \mathbb{E}\left( 1 - |\langle y, Q \rangle|^2  \right) = 1 - \max_{y : \|y\| = 1} \, \mathbb{E}\left( | \langle y, Q \rangle |^2 \right)$. 
		Hence, $\psi_{\not \mathcal{E}} \in \operatorname{argmax}_{y : \|y\| = 1}  \mathbb{E}\left(| \langle y, Q \rangle |^2 \right)$, and $\mathbb{E}\left(| \langle y, Q \rangle |^2 \right) = \langle y, \Sigma(y) \rangle = \langle y, \sum_k\lambda_k \langle e_k, y\rangle e_k\rangle = \sum_k \lambda_k |\langle e_k, y\rangle|^2 \leq \lambda_1 \sum_k |\langle e_k, y\rangle|^2 = \lambda_1 \|y\|^2 \overset{}{=} \lambda_1$, due to $\lambda_k \leq \lambda_1$ and $\|y\|=1$, with equality attained by all $y = \frac{x}{\|x\|}$ with $x \in \mathcal{Y}_1$. This also yields $(\mu_{\not \mathcal{E}})$ and $\sigma_{\not \mathcal{E}}^2$.
	\end{proof}
	
	\begin{proof}[Proposition \ref{lem:efullProcDist}]
		$d_\mathcal{E}$ defines a metric on $\mathfrak{B}$ and allows for the provided expression:
		\begin{align*}
			d_\mathcal{E}^2([ \beta_1 ], [ \beta_2]) 
			&= \inf_{a \geq 0, v_i \in \mathbb{C}, \omega_i \in \mathbb{R}, \gamma_i \in \Gamma, i=1,2} \|\  \exp(\im\,\omega_1)\ q_1\circ\gamma_1 {\dot{\gamma}_1^{1/2}} - a\ \exp(\im\,\omega_2)\ q_2\circ\gamma_2 {\dot{\gamma}_2^{1/2}}\ \|^2\\
			&\overset{(*)}{=} \inf_{u\in \mathbb{C}, \gamma \in \Gamma} \|q_1 - u\, q_2\circ\gamma\, {\dot{\gamma}^{1/2}}\|^2 
			\overset{(**)}{=}  1 - \sup_{\gamma \in \Gamma} | \langle q_1, q_2 \circ \gamma {\dot{ \gamma}^{1/2}} \rangle |^2
		\end{align*}
		where $(*)$ follows from isometry of rotation and warping action setting $u = a\, \exp(\im\, (\omega_2 - \omega_1)),\ \gamma = \gamma_2\circ\gamma_1^{-1}$; and $(**)$ is analogous to the proof of Proposition 3. 
		
		As $\Gamma$ acts on $\tilde{\mathfrak{B}}$ by isometries, $\inf_{u\in \mathbb{C}, \gamma \in \Gamma} \|q_1 - u\, q_2\circ\gamma\, {\dot{\gamma}^{1/2}}\| = \inf_{\gamma \in \Gamma} d_{\not\mathcal{E}}((\beta_1), (\beta_2))$ is a semi-metric. To see that it is also positive-definite, assume $d_\mathcal{E}([\beta_1], [\beta_2]) = 0$. 
		Consider any minimizing sequence $\{u_l\}_l$  with
		$0 = d_\mathcal{E}([\beta_1], [\beta_2]) = \inf_{\gamma\in \Gamma} \lim_{l \rightarrow \infty} \| q_1 - u_l q_2 \circ \gamma {\dot{ \gamma}^{1/2}}\|$. Then, $\{u_l\}_l$ is bounded, since $|u_l| \|q_2\| = \inf_{\gamma\in\Gamma} |u_l| \|q_2 \circ \gamma {\dot{ \gamma}^{1/2}}\| = \inf_{\gamma\in\Gamma} \|u_l q_2 \circ \gamma {\dot{ \gamma}^{1/2}}\| \leq \inf_{\gamma\in\Gamma} \|u_l q_2 \circ \gamma {\dot{ \gamma}^{1/2}} - q_1\| + \|q_1\| = \|q_1\|$ and $\|q_2\|>0$ since $\beta_1$ is assumed non-constant. Hence, there is a convergent sub-sequence $\lim_{h\rightarrow\infty} u_{l_h} = u$, and $0 = \inf_{\gamma\in \Gamma} \lim_{h \rightarrow \infty} \| q_1 - u_{l_h} q_2 \circ \gamma {\dot{ \gamma}^{1/2}}\| \overset{\text{continuity}}{=} \inf_{\gamma \in \Gamma} \|q_1 - u\, q_2\circ \gamma {\dot{ \gamma}^{1/2}}\|$ which is known to be a metric on $\mathbf{q}_1 = \kappa(q_1), \mathbf{q}_2 = \kappa(q_2) \in \mathbb{L}^2([0,1], \mathbb{R}^2)$ \citep{bruveris2016optimal}. Hence, also $[\beta_1] = [\beta_2]$ which completes the proof.
	\end{proof}
	
	\begin{proof}[Proposition \ref{lem:efullProcMean}]
		In analogy to Proposition \ref{lem:ifullProcMean}, 
		$\min_{[\beta] \in \mathfrak{B}} \mathbb{E}\left( d^2_\mathcal{E}([\beta], [B])  \right) = \min_{y : \|y\|=1} \mathbb{E}\left( 1 - \sup_{\gamma \in \Gamma} |\langle y, Q \circ \gamma \, \dot{ \gamma}^{1/2} \rangle|^2  \right) = 1 - \max_{y : \|y\| = 1} \, \mathbb{E}\left( \sup_{\gamma \in \Gamma} | \langle y, Q \circ \gamma\, \dot{ \gamma}^{1/2} \rangle |^2 \right)$.
	\end{proof}
	
	
	\subsection{The square-root-velocity representation in a sparse/irregular setting}
	
	\begin{theorem}\label{app:lem:feasible}
		Let $\beta:[0,1] \rightarrow \mathbb{C}$ be continuous, injective, and, for all $t\in(0,1)$, continuously differentiable with $\dot{\beta}(t) = \frac{d}{dt}\Re \circ \beta(t) + \im \, \frac{d}{dt}\Im \circ \beta(t) \neq 0$. Then,  there exists a $c \in (0,1)$ such that $\dot{\beta}(c) = \delta\, (\beta(1) - \beta(0))$ for some $\delta>0$.
	\end{theorem}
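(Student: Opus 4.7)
After applying the affine normalization $z\mapsto (z-\beta(0))/(\beta(1)-\beta(0))$, which preserves continuity, injectivity, and $C^1$-regularity with nonvanishing derivative on $(0,1)$, it suffices to treat the case $\beta(0)=0,\,\beta(1)=1$ and to exhibit $c\in(0,1)$ with $\dot\beta(c)\in\mathbb{R}_{>0}$. Note $\beta(1)-\beta(0)\neq 0$ by injectivity, so the normalization is well-defined.

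The degenerate case $\Im\beta\equiv 0$ on $[0,1]$ is immediate: a continuous injective real-valued map with $\beta(0)<\beta(1)$ is strictly increasing, and then $\dot\beta(c)>0$ for every $c\in(0,1)$ by $\dot\beta\neq 0$.

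In the generic case $\Im\beta\not\equiv 0$, the plan is to combine Rolle's theorem on $\Im\beta$ with a graph reparametrization based on the real mean value theorem applied to $\Re\beta$. Rolle yields some $c\in(0,1)$ with $\Im\dot\beta(c)=0$, which gives $\dot\beta(c)\in\mathbb{R}\setminus\{0\}$ but does not pin down the sign. Meanwhile the MVT applied to $\Re\beta$ guarantees that $I:=\{t\in(0,1):\Re\dot\beta(t)>0\}$ is a nonempty open subset of $(0,1)$. On each connected component $(a,b)$ of $I$ the strict monotonicity of $\Re\beta$ allows me to reparametrize the arc as a graph $x\mapsto x+\im\,h(x)$ over $(\Re\beta(a),\Re\beta(b))$, with $h$ continuously differentiable and $h'(x)=\Im\dot\beta(t)/\Re\dot\beta(t)$. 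Whenever the endpoint heights agree, $\Im\beta(a)=\Im\beta(b)$, Rolle applied to the continuous extension of $h$ produces a critical point, which pulled back to the parameter $t$ yields $c\in(a,b)\subset I$ with $\Im\dot\beta(c)=0$ and simultaneously $\Re\dot\beta(c)>0$, hence $\dot\beta(c)\in\mathbb{R}_{>0}$, finishing the proof.

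The main obstacle is showing that at least one component of $I$ actually has matching $\Im\beta$-values at its endpoints. When $I=(0,1)$ this is trivial, since the single component has $\Im\beta(0)=\Im\beta(1)=0$. In general injectivity must be used essentially: if every component $(a_k,b_k)$ had $\Im\beta(a_k)\neq\Im\beta(b_k)$, then tracking the interlocking graph pieces glued across the complementary intervals (where $\Re\dot\beta\leq 0$, so $\Re\beta$ is non-increasing) against the global boundary conditions $\Re\beta(0)=0,\,\Re\beta(1)=1$ and $\Im\beta(0)=\Im\beta(1)=0$ would force the curve to re-enter a horizontal slice $\{\Im z=y_0\}$ at a height it has already traversed, contradicting the simple-arc property. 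Carrying this out rigorously, especially in the presence of infinitely many components or components that accumulate at critical points of $\Re\beta$, is the subtlest step and is where the full strength of the injectivity hypothesis is invoked.
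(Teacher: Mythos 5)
The argument has a genuine gap at precisely the point you flag as ``the subtlest step,'' and unfortunately the claim you would need there is false, so the gap cannot be closed as stated. Take your normalization $\beta(0)=0$, $\beta(1)=1$ and consider the following simple, regular, $C^1$ arc: it first runs from $0$ up and to the right to $\tfrac12+\im$ with $\Re\dot\beta>0$, arriving with a vertical upward tangent; it then runs up and to the left to $\tfrac15+2\,\im$ with $\Re\dot\beta<0$ on the interior of this piece, leaving and arriving vertically; finally it rises from $\tfrac15+2\,\im$ over a hump that stays above the two earlier pieces (say above the line $\Im z=2$ until $\Re z>\tfrac12$) and then descends to $1$, with $\Re\dot\beta>0$ throughout the interior of this last piece. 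Here $I$ has exactly two components, with endpoint heights $(0,1)$ and $(2,0)$ respectively; neither pair matches, so the lemma you reduce everything to fails, and the point guaranteed by the theorem instead sits at the interior maximum of $\Im\beta$ on the second component. What your Rolle argument actually needs is the weaker statement that on some component of $I$ the function $\Im\beta$ fails to be injective (equivalently, attains an interior extremum); but your sketched reason why a contradiction ``would'' arise --- the curve ``re-entering a horizontal slice at a height it has already traversed'' --- is not a contradiction at all, since a simple arc may revisit any given imaginary part arbitrarily often.

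Establishing a correct version of that claim is where the real work lies, and it requires a topological separation argument rather than bookkeeping of interlocking graph pieces. The paper's proof rotates so that the chord is $\im$, locates a parameter $c$ with vertical tangent (a maximizer of $\Re\beta$ over a suitable subinterval $[t_0,t_1]$ with $\Re\beta(t_0)=\Re\beta(t_1)=0$ and extremality conditions on $\Im\beta$), and then closes the arc $\beta|_{[t_0,c]}$ with a polygonal path so that the Jordan curve theorem pins down the sign of $\Im\dot\beta(c)$. Some input of this kind (Jordan curve theorem, or an equivalent winding or separation argument) appears unavoidable here; your proposal contains no such ingredient, so as written it does not constitute a proof.
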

	
	\begin{proof}
		Let $\rho = \Re\circ \beta$ and $\zeta = \Im\circ \beta$ denote the real and imaginary part of $\beta$. Without loss of generality assume $\beta(0) = 0$ and $\beta(1) = \im$.
		Choose $0 \leq t_0< t_1 \leq 1$ with $\rho(t_0) = \rho(t_1) = 0$ such that $\zeta(t) \geq \zeta(t_0)$ for all $t\in[0,1]$ with $\rho(t) = 0$
		and $\zeta(t) \leq \zeta(t_1)$ for all $t \in [t_0, 1]$ with $\rho(t) = 0$.
		If $\rho(t) = 0$ for all $t \in [t_0, t_1]$ and, hence, $\dot{\beta}(t) = \im\, \dot{\zeta}(t)$ within $(t_0, t_1)$, the Mean Value Theorem directly yields existence of the desired $c\in(t_0, t_1)$. 
		We may, thus, assume $\rho(t)\neq 0$ for some $t \in [t_0, t_1]$, say, with $\rho(t)>0$. 
		Accordingly, a maximizer $c \in [t_0, t_1]$ with $\rho(c) = \max_{t\in[t_0, t_1]} \rho(t) > 0$ lies in $(t_0, t_1)$ and $\dot{\rho}(c) = 0$, since $\rho$ is continuously differentiable. Hence $\dot{\beta}(c) = \im \dot{\zeta}(c) \neq 0$ as $\beta$ is regular. $t_0 \neq t_1$ and $c$  all exist due to compactness/continuity arguments.
		
		We will now assume $\delta =\dot{\zeta}(c) < 0$ and show that this leads to a contradiction.
		With some upper/lower bounds $\rho_{\sup} > \rho(c) (> 0)$ and $\zeta_{\inf} < \min_{t\in[0,1]} \zeta(t)$, we  construct the open polygonal curve $\alpha:[c, 1]$ connecting the points $a_1 = \beta(c)$, $a_2 = \rho_{\sup} + \im \zeta_{\inf}$, $a_3 = \im \zeta_{\inf}$ and $a_4 = \beta(t_0) \leq 0$. Then $\beta \ind_{[t_0, c]} + \alpha \ind_{[c, 1]}$ is a simple closed continuous curve on $[t_0, 1]$, hence splits $\mathbb{C}$ into two connected open components, the interior component $\mathcal{A}\subset \mathbb{C}$ which is bounded and the exterior component $\mathcal{U} = \mathbb{C}\setminus \bar{\mathcal{A}}$ (Jordan curve theorem) where $\bar{\mathcal{A}}$ denotes the closure of $\mathcal{A}$.
		The path $\phi:[0, \infty) \rightarrow \mathbb{C}$, $r \mapsto \beta(t_1) + r\, \im$ does not intersect the boundary $\beta([t_0, c]) \cup \alpha([c, 1]) = \bar{\mathcal{A}}\cap \bar{\mathcal{U}}$ for all $r\geq0$, since, by construction, $\zeta(t_1) > \Im(a_k)$ for $k=2,\dots, 4$ and, for all $t\in[t_0, c]$ with $\rho(t)=0$, $\zeta(t_1) > \zeta(t)$ as $\zeta(t_1) \geq \zeta(t)$, $c<t_1$ and $\beta$ injective.
		Thus, $\phi$ lies entirely in $\mathcal{A}$ or in $\mathcal{U}$. Since $\mathcal{A}$ is bounded, the path and, in particular, $\phi(0)=\beta(t_1) \in \mathcal{U}$.
		Due to the construction of $\alpha$ and injectivity of $\beta$ that do not permit intersection of the boundary (Jordan curve), $\beta(t)$ lies in $\mathcal{A}$ for all $t > c$ if it lies within $\mathcal{A}$ for some $t > c$. This makes the local behavior at $c$ crucial. Thus, the assumption of $\dot{\zeta}(c) < 0$ entailing $\beta(t) \in \mathcal{A}$ for some $t > 0$ yields, in particular, $\beta(t_1) \in \mathcal{A}$ and, hence, the desired contradiction.  
	\end{proof}
	
	\begin{corollary}[Feasible sampling]
		If $\beta^*:[0,1] \rightarrow \mathbb{C}$ is continuous and $\beta^*: (t_{j-1}^*, t_j^*) \rightarrow \mathbb{C}$ continuously differentiable for $j = 1, \dots, n_0$, $t^*_0 < \dots < t^*_{n_0}$ with non-vanishing derivative, then for any time points $0 < t_1 < \dots < t_{n_0}< 1$ and speeds $w_1, \dots, w_{n_0}>0$, there exists a $\gamma \in \Gamma$ such that 	for the \SRV-transform $q$ of $\beta = \beta^* \circ \gamma$,
		$q(t_j) = {w_j^{1/2}} \, (\beta^*(t^*_{j}) - \beta^*(t^*_{j-1})) = {w_j^{1/2}} \, \Delta_j$ for all $j = 1, \dots n_0$.
	\end{corollary}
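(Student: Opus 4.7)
The plan is to translate the SRV requirement into a condition on $\dot\beta(t_j)$, split it via the chain rule $\dot\beta = (\dot{\beta^*}\circ\gamma)\,\dot\gamma$ into a direction condition on $\gamma(t_j)$ and a magnitude condition on $\dot\gamma(t_j)$, realise the direction using Theorem~\ref{app:lem:feasible} on each piece of $\beta^*$, and finally interpolate a smooth monotone $\gamma$ through the resulting nodal data.

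First I would rewrite $q(t_j)=w_j^{1/2}\Delta_j$ equivalently as $\dot\beta(t_j)=w_j|\Delta_j|\Delta_j$: matching moduli in $q=\dot\beta/|\dot\beta|^{1/2}$ forces $|\dot\beta(t_j)|=w_j|\Delta_j|^2$, while matching arguments forces $\dot\beta(t_j)$ to be a positive real multiple of $\Delta_j$. Using the chain rule $\dot\beta(t_j)=\dot{\beta^*}(\gamma(t_j))\,\dot\gamma(t_j)$ and writing $c_j:=\gamma(t_j)$, $d_j:=\dot\gamma(t_j)>0$, the task reduces to choosing $c_j\in(t^*_{j-1},t^*_j)$ with $\dot{\beta^*}(c_j)=\delta_j\Delta_j$ for some $\delta_j>0$, and then setting $d_j=w_j|\Delta_j|/\delta_j$. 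Existence of such $c_j$ (and $\delta_j$) is exactly what Theorem~\ref{app:lem:feasible} provides when applied to the restriction of $\beta^*$ to $[t^*_{j-1},t^*_j]$ -- continuous, $C^1$ on the open segment, regular by hypothesis (with injectivity on each piece either assumed or obtained by passing to a smaller neighbourhood of a regular point where the inverse function theorem gives it).

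The main obstacle is then the last step: producing an honest $\gamma\in\Gamma$, i.e.\ a $C^1$ orientation-preserving diffeomorphism of $[0,1]$, interpolating the nodal values $\gamma(t_j)=c_j$ with prescribed positive derivatives $\dot\gamma(t_j)=d_j$. Because the $c_j$ are strictly increasing (they lie in disjoint, ordered sub-intervals) and all $d_j$ are strictly positive, I would realise $\gamma(t)=\int_0^t h(s)\,ds$ as the primitive of a strictly positive smooth density $h\colon[0,1]\to(0,\infty)$ equal to $d_j$ at each $t_j$, whose partial integrals $\int_0^{t_j}h$ match $c_j$, and whose total integral equals $1$. Such an $h$ can be assembled from narrow smooth bumps of height $d_j$ centred at the $t_j$, joined by positive smooth transitions on the complementary intervals whose total masses are tuned to absorb the residual required areas; strict positivity of $h$ then automatically yields a $C^1$ diffeomorphism $\gamma$ of $[0,1]$.

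Once $\gamma$ is constructed, the verification $q(t_j)=\dot\beta(t_j)/|\dot\beta(t_j)|^{1/2}=\delta_j d_j\,\Delta_j/(\delta_j d_j|\Delta_j|)^{1/2}=w_j^{1/2}\Delta_j$ closes the argument. All non-trivial work is concentrated in the density construction of the previous paragraph; everything upstream is essentially bookkeeping on the chain rule together with a single application of Theorem~\ref{app:lem:feasible} per segment.
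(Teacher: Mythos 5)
Your proposal is correct and follows essentially the same route as the paper: apply Theorem~\ref{app:lem:feasible} on each segment $[t^*_{j-1},t^*_j]$ to find $c_j$ with $\dot{\beta^*}(c_j)$ a positive multiple of the chord $\Delta_j$, then choose $\gamma\in\Gamma$ with $\gamma(t_j)=c_j$ and $\dot\gamma(t_j)$ tuned to fix the modulus; your explicit density construction of $\gamma$ just fills in a step the paper merely asserts. The one shaky aside is the parenthetical on injectivity: shrinking to a neighbourhood where $\beta^*$ is injective changes the chord and hence does not recover the direction $\Delta_j$, so injectivity on each piece really must be taken as (implicitly) assumed, exactly as the paper does.
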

	\begin{proof}
		Since this is a local property, it suffices to consider the case of $n_0=1$ and $t_0^* = 0, t_1^* = 1$.
		By Theorem \ref{app:lem:feasible}, there exists $c \in (0,1)$ with $\dot{\beta}(c)^* = a\, \Delta_1$ for some $a>0$.
		Choose $\gamma\in\Gamma$ such that $\gamma(t_1) = c$ and $\dot\gamma(t_1)={w_1}{a^{-2}}$. Then, $q(t_j) = \beta^*\circ\gamma(t_j) {\dot{\gamma(t_j)}^{1/2}} = a\, \Delta_1 {{w_1^{1/2}}{a^{-1}}} = {w_1^{1/2}} \Delta_1$ for all $j=1, \dots, n_0$. 
	\end{proof}
	
	
	\subsection{Estimating elastic full Procrustes means via Hermitian covariance smoothing}
	
	In the following, we provide additional details for three steps in our proposed elastic full Procrustes mean estimation algorithm. We commence with proposing a more efficient covariance estimation procedure for data with densely observed curves and continue with a discussion of conditional complex Gaussian processes in Proposition \ref{prop:conditionalgaussian} underlying our estimation of length and optimal rotation of curves. Finally, we detail the warping alignment strategy proposed for the re-parameterization step.
	
	\textbf{Covariance estimation for densely observed curves:}
	If curves $y_1, \dots, y_n$, are sampled densely enough,
	covariance estimation can be achieved computationally more efficient than by Hermitian covariance smoothing. 
	In fact, for say $n_i > 1000$ samples per curve and $m$ basis functions $\mathbf{f} = (f_1, \dots, f_m)^\top$ for each margin, setting up the joint $(\sum_{i=1}^{n}n_i^2) \times ({m^2\pm m})/{2}$ design matrices for tensor-product covariance smoothing may also cause working memory shortage. 
	Using the notation of Section 2.2, 
	we obtain a tensor-product covariance estimator $\hat{C}(s,t) = \mathbf{f}^\top\!(s)\, \widehat{\mathbf{\Xi}}\, \mathbf{f}(t)$ of the same form by setting $\widehat{\mathbf{\Xi}} = \frac{1}{n} \sum_{i=1}^{n} \hat{\boldsymbol{\vartheta}}_i \hat{\boldsymbol{\vartheta}}_i^\dagger$ to the empirical covariance matrix of complex coefficient vectors $\hat{\boldsymbol{\vartheta}}_i = (\hat{\vartheta}_{i1}, \dots, \hat{\vartheta}_{im})^\top \in \mathbb{C}^m$ of basis representations $y_i(t) \approx \sum_{k=1}^{m} \hat{\vartheta}_{ik} f_k(t)$ for $i = 1,\dots, n$. 
	Partitioning the data into $\mathcal{N}_1 \cup \dots \cup \mathcal{N}_N = \{1,\dots, n\}$ subsets for computational efficiency (which might simply be given by $\mathcal{N}_i = \{i\}$), the estimators $\hat{\boldsymbol{\vartheta}}_i$ are fit by minimizing the penalized least-squares criterion
	\begin{equation} \nonumber
		\textsc{pls}(\boldsymbol{\vartheta}_{i\Re}, \boldsymbol{\vartheta}_{i\Im}) = 
		\sum_{i\in \mathcal{N}_l} \sum_{j=1}^{n_i}  
		\left| y_{ij} - 
		{\boldsymbol{\vartheta}}_{i\Re}^\top \mathbf{f}(t_{ij}) - \im\,\boldsymbol{\vartheta}_{i\Im}^\top\mathbf{f}(t_{ij})
		\right|^2 + 
		\eta\, {\boldsymbol{\vartheta}}_{i\Re}^\top \mathbf{P} \, {\boldsymbol{\vartheta}}_{i\Re} + 
		\eta\, {\boldsymbol{\vartheta}}_{i\Im}^\top \mathbf{P}\, {\boldsymbol{\vartheta}}_{i\Im}
	\end{equation}
	with $\boldsymbol{\vartheta}_{i\Re} = \Re(\boldsymbol{\vartheta}_{i})$ and $\boldsymbol{\vartheta}_{i\Im} = \Im(\boldsymbol{\vartheta}_{i})$, 
	for $l=1,\dots, N$.
	In principle, real and imaginary parts can be separately fit with the same smoothing parameter $\eta \geq0$ in both parts to achieve rotation invariant penalization.
	As in Section 2.2, 
	we use the \texttt{mgcv} framework for fitting \citep{Wood2017} using restricted maximum likelihood (\REML)  estimation for $\eta$. 
	To speed up computation, $\eta$ can be estimated only on $\mathcal{N}_1$ and fixed for $l=2, \dots, N$, or set to $\eta=0$ if no measurement error is assumed or no penalization is desired. 
	The residual variance yields a constant estimate for $\tau^2$. Using for instance \texttt{mgcv}'s ``\texttt{gaulss}" family, a smooth estimator $\hat{\tau}^2(t)$ could be obtained as well but is not detailed here.
	
	\textbf{Rotation and length estimation:}
	As proposed by \cite{Yao:etal:2005} for predicting scores in functional principal component analysis, we propose to use conditional expectations under a working normality assumption to incorporate the covariance structure of the data into estimation of inner products and quadratic terms. These are used for predicting basis coefficients of a curve (Proposition \ref{prop:conditionalgaussian} iii) Equation \eqref{eq:expectedCoef}), its optimal rotation to the mean \eqref{eq:expectedScore}, its length \eqref{eq:expectedNorm}, and its distance from the mean \eqref{eq:expectedScoreSquare} or another given curve. 
	We provide required conditional expectations covering both the case of a positive white noise error variance $\tau^2(t)>0$ and of no white noise error ($\tau^2(t)=0$) for each time point $t$. 
	The distinction runs through all formulations and reading might be more convenient when assuming either of the cases is always fulfilled.
	
	\begin{proposition}[Conditional Gaussian process]
		\label{prop:conditionalgaussian}
		Consider a random element $Y$ in a complex Hilbert space $\mathbb{H}$ of functions $\mathcal{T} \rightarrow \mathbb{C}$ defined on some set $\mathcal{T}$. Assume $Y = \sum_{k=1}^{m} Z_k e_k$ finitely generated with probability one from a
		finite set $\mathbf{e}(t) = (e_1(t), \dots, e_m(t))^\top$ of functions $e_k\in\mathbb{H}$ with regular Gramian $\mathbf{G} = \{\langle e_k, e_{k'} \rangle\}_{k,k'} \in \mathbb{C}^{m\times m}$ 
		and with $\mathbf{Z}=(Z_1, \dots, Z_m)^\top$
		following a complex symmetric multivariate normal distribution with positive-definite covariance matrix $\mathbf{\Lambda}$. 
		Let further denote $\varepsilon$ an uncorrelated complex symmetric error process on $\mathcal{T}$ with variance function $\tau^2: \mathcal{T} \rightarrow \mathbb{R}$.
		We consider a sequence of $n_* = n_0 + n_+$ points $t_1, \dots, t_{n_*} \in \mathcal{T}$ and values $y_1, \dots, y_{n_*} \in \mathbb{C}$ with $\tau^2(t_1), \dots, \tau^2(t_{n_0}) = 0$ and $\tau^2(t_{n_0 + 1}), \dots, \tau^2(t_{n_0 + n_+}) > 0$. Write $\mathbf{E} = \{e_k(t_j)\}_{jk} = (\mathbf{E}_0^\top, \mathbf{E}_+^\top)^\top$ for the $n_*\times m$ design matrix of function evaluations subdivided into $\mathbf{E}_0\in \mathbb{C}^{n_0 \times m}$ and $\mathbf{E}_+ \in \mathbb{C}^{n_+ \times m}$  containing the evaluations with zero and positive error variance, respectively, and analogously $\mathbf{y} = (y_1, \dots, y_{n_*})^\top = (\mathbf{y}_0^\top, \mathbf{y}_+^\top)^\top$ for the values and $\mathbf{T}_+ = \operatorname{Diag}(\tau^2(t_1), \dots, \tau^2(t_{n_+}))$ for the diagonal $n_+\times n_+$ noise covariance matrix. 
		Let $r_0 = \operatorname{rank}(\mathbf{E}_0)$ denote the rank of $\mathbf{E}_0$ and $\mathbf{Q} = (\mathbf{M}, \mathbf{N})$ be an $m\times m$ Hermitian matrix such that $\mathbf{M}$ is $m \times r_0$ and $\mathbf{N}$ spans the null space of $\mathbf{E}_0$. $\mathbf{Q}$ is obtained, e.g., by the QR-decomposition $\mathbf{E}_0^\top  = \mathbf{Q} \mathbf{R}$. By convention, matrices are set to $0$ if their rank is zero (i.e., if $m-r_0$, $n_0$, or $n_+=0$, respectively).
		Conditioning on $Y(t_{j}) + \varepsilon(t_j) = \mathbf{Z}^\top \mathbf{e}(t_j) + \varepsilon(t_j) = y_j$ for $j=1,\dots, n_*$ we obtain:
		\begin{enumerate}[i)]
			\item 
			$\mathbf{Z} = \mathbf{Z}_+ + \mathbf{z}_0$ is split into a random part $\mathbf{Z}_+ = \mathbf{N}\tilde{\mathbf{Z}}_+$ constrained to the linear sup-space $\operatorname{span}(\mathbf{N})$ spanned by $\mathbf{N}$, with $\tilde{\mathbf{Z}}_+$ a complex random vector of length $m-r_0$, and a deterministic part $\mathbf{z}_0 = \mathbf{M} \left(\mathbf{M}^\dagger \mathbf{E}_0^\dagger \mathbf{E}_0 \mathbf{M} \right)^{-1} \mathbf{M}^\dagger \mathbf{E}_0^{\dagger}\mathbf{y}_0$. In fact, under the given assumptions $\mathbf{z}_0 = \mathbf{M} (\mathbf{M} \mathbf{E}_0)^{-\dagger} \mathbf{y}_0$ with probability one, but the generalized inverse is robust with respect to the case where $\mathbf{y}_0 \notin \operatorname{span}(\mathbf{E}_0)$, i.e. where no measurement error is assumed but the curve cannot be exactly fit by the chosen basis. 
			\item $\tilde{\mathbf{Z}}_+$ follows a complex normal 
			with covariance $\mathbf{S} = \left(\mathbf{N}^\dagger \left(\mathbf{E}_+^\dagger \mathbf{T}_+^{-1} \mathbf{E}_+ + \mathbf{\Lambda}^{-1}\right) \mathbf{N} \right)^{-1}$, mean $\hat{\mathbf{z}}_+ =  \mathbf{S} \mathbf{N}^\dagger \left( \mathbf{E}_+^\dagger \mathbf{T}_+^{-1} \left(\mathbf{y}_+ - \mathbf{E}_+ \mathbf{z}_0\right) - \mathbf{\Lambda}^{-1} \mathbf{z}_0 \right)$ and zero pseudo-covariance.
			\item For $x \in \mathbb{H}$ and $\mathbf{g}_x = (\langle e_1, x \rangle, \dots, \langle e_m, x \rangle)$, this provides conditional means 
			\begin{eqnarray}
				\label{eq:expectedCoef} \hat{\mathbf{z}} = \mathbb{E}\left( \mathbf{Z} \mid Y(t_j) + \varepsilon(t_j) = y_j, j = 1, \dots, n_* \right) &=& \mathbf{N} \hat{\mathbf{z}}_+ + \mathbf{z}_0\\	
				\label{eq:expectedScore} \mathbb{E}\left( \langle Y, x \rangle \mid Y(t_j) + \varepsilon(t_j) = y_j, j = 1, \dots, n_* \right) &=& \hat{\mathbf{z}}^\dagger  \mathbf{g}_x\\	
				\label{eq:expectedNorm} \mathbb{E}\left( \| Y \|^2 \mid Y(t_j) + \varepsilon(t_j) = y_j, j = 1, \dots, n_* \right) &=& \operatorname{tr}\left( \mathbf{S}\, \mathbf{G} \right) + \hat{\mathbf{z}}^\dagger \mathbf{G} \hat{\mathbf{z}}.\\
				\label{eq:expectedScoreSquare} \mathbb{E}\left( | \langle Y, x \rangle |^2 \mid Y(t_j) + \varepsilon(t_j) = y_j, j = 1, \dots, n_* \right) &=& \mathbf{g}_x^\dagger \mathbf{S}\, \mathbf{g}_x  + \mathbf{g}_x \hat{\mathbf{z}}^\dagger \hat{\mathbf{z}}\, \mathbf{g}_x^\dagger.
			\end{eqnarray} 
		\end{enumerate} 
	\end{proposition}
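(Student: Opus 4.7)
The plan is to exploit the unitary split $\mathbf{Z}=\mathbf{M}\mathbf{Z}_M+\mathbf{N}\tilde{\mathbf{Z}}_+$ induced by the decomposition $\mathbf{Q}=(\mathbf{M},\mathbf{N})$: the exact observations $\mathbf{y}_0$ pin down the $\mathbf{M}$-component, while the $\mathbf{N}$-component is a free direction whose posterior is governed by the complex Gaussian prior together with the noisy likelihood on $\mathbf{y}_+$. Part~(i) is then a linear-algebra computation: substituting the split into $\mathbf{E}_0\mathbf{Z}=\mathbf{y}_0$ and using $\mathbf{E}_0\mathbf{N}=\mathbf{0}$ reduces the constraint to $\mathbf{E}_0\mathbf{M}\mathbf{Z}_M=\mathbf{y}_0$. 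Because $\mathbf{M}$ spans the orthogonal complement of the null space of $\mathbf{E}_0$, the $n_0\times r_0$ matrix $\mathbf{E}_0\mathbf{M}$ has full column rank, so $\mathbf{M}^\dagger\mathbf{E}_0^\dagger\mathbf{E}_0\mathbf{M}$ is positive definite and the normal equations give the stated closed form for $\mathbf{Z}_M$, and hence for $\mathbf{z}_0=\mathbf{M}\mathbf{Z}_M$. When $\mathbf{y}_0\in\mathrm{col}(\mathbf{E}_0)$ this reduces to the compact inverse $(\mathbf{M}\mathbf{E}_0)^{-\dagger}\mathbf{y}_0$, while for general $\mathbf{y}_0$ the normal equations deliver an orthogonal projection, which explains the ``robust'' formulation in the statement.

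For part~(ii), I adopt a complex Bayesian posterior argument. Up to additive constants, the negative log-posterior of $\mathbf{Z}$ given $\mathbf{y}_+$ and the hard constraint $\mathbf{E}_0\mathbf{Z}=\mathbf{y}_0$ is
\[
L(\mathbf{Z}) \;=\; \mathbf{Z}^\dagger\mathbf{\Lambda}^{-1}\mathbf{Z} \,+\, (\mathbf{y}_+-\mathbf{E}_+\mathbf{Z})^\dagger\mathbf{T}_+^{-1}(\mathbf{y}_+-\mathbf{E}_+\mathbf{Z}).
\]
Substituting $\mathbf{Z}=\mathbf{z}_0+\mathbf{N}\tilde{\mathbf{Z}}_+$, differentiating via Wirtinger calculus with respect to $\tilde{\mathbf{Z}}_+^\dagger$ and setting to zero, the quadratic coefficient identifies the posterior precision as $\mathbf{S}^{-1}=\mathbf{N}^\dagger(\mathbf{\Lambda}^{-1}+\mathbf{E}_+^\dagger\mathbf{T}_+^{-1}\mathbf{E}_+)\mathbf{N}$, and the linear term gives the stated $\hat{\mathbf{z}}_+$. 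Zero pseudo-covariance transfers from the prior to the posterior because $\mathbf{Z}$ enters prior, likelihood, and constraint only through complex-linear operations (no conjugation of $\mathbf{Z}$ appears anywhere), so after completing the square the posterior density of $\tilde{\mathbf{Z}}_+$ depends on it only through the Hermitian form $(\tilde{\mathbf{Z}}_+-\hat{\mathbf{z}}_+)^\dagger\mathbf{S}^{-1}(\tilde{\mathbf{Z}}_+-\hat{\mathbf{z}}_+)$.

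Part~(iii) is then assembled from standard identities. Linearity of expectation yields \eqref{eq:expectedCoef}. Writing $\langle Y,x\rangle=\sum_k Z_k^\dagger\langle e_k,x\rangle=\mathbf{Z}^\dagger\mathbf{g}_x$ (the inner product being conjugate-linear in its first argument, as in the paper's earlier treatment) gives \eqref{eq:expectedScore}. For the two quadratic expectations I apply the Gaussian formula $\mathbb{E}(\mathbf{Z}^\dagger\mathbf{A}\mathbf{Z}\mid\cdot)=\mathrm{tr}(\mathbf{A}\,\mathrm{Cov}(\mathbf{Z}\mid\cdot))+\hat{\mathbf{z}}^\dagger\mathbf{A}\hat{\mathbf{z}}$ with conditional covariance $\mathrm{Cov}(\mathbf{Z}\mid\cdot)=\mathbf{N}\mathbf{S}\mathbf{N}^\dagger$, taking $\mathbf{A}=\mathbf{G}$ and $\mathbf{A}=\mathbf{g}_x\mathbf{g}_x^\dagger$ respectively, and invoking the cyclic trace $\mathrm{tr}(\mathbf{G}\mathbf{N}\mathbf{S}\mathbf{N}^\dagger)=\mathrm{tr}(\mathbf{S}\mathbf{N}^\dagger\mathbf{G}\mathbf{N})$, which is what the compact notation ``$\mathrm{tr}(\mathbf{S}\mathbf{G})$'' in \eqref{eq:expectedNorm} abbreviates.

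The step I expect to demand most care is the pseudo-covariance bookkeeping in part~(ii): proper complex Gaussian conditioning identities must be carried out consistently with Hermitian (rather than ordinary) transposes, and the zero-noise limit $\tau^2(t_j)\to 0$ for $j\le n_0$ must be handled so that the generic Gaussian conditioning formula does not blow up but instead collapses onto the orthogonal-projection formula of part~(i). Parameterizing on $\ker(\mathbf{E}_0)$ from the outset via the split $\mathbf{Z}=\mathbf{z}_0+\mathbf{N}\tilde{\mathbf{Z}}_+$, as above, sidesteps this subtlety and reduces the rest of the argument to a routine completion of the square.
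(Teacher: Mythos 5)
Your proposal follows essentially the same route as the paper's proof: both parameterize $\mathbf{Z}$ via the split induced by $\mathbf{Q}=(\mathbf{M},\mathbf{N})$, solve the exact constraint $\mathbf{E}_0\mathbf{M}\tilde{\mathbf{z}}_0=\mathbf{y}_0$ for part (i), complete the square in $\tilde{\mathbf{z}}_+$ on the (prior $\times$ likelihood) density for part (ii), and obtain part (iii) from linearity together with the decomposition $\mathbb{E}\left(\mathbf{Z}\mathbf{Z}^\dagger\right)=\operatorname{Var}\left(\mathbf{Z}\right)+\hat{\mathbf{z}}\hat{\mathbf{z}}^\dagger$. Your reading of $\operatorname{tr}\left(\mathbf{S}\,\mathbf{G}\right)$ as $\operatorname{tr}\left(\mathbf{S}\,\mathbf{N}^\dagger\mathbf{G}\mathbf{N}\right)$ is the correct interpretation of the paper's abbreviated notation, and the argument is sound.
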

	\begin{proof}
		The computation is analogous to the real case. Defining $\bm{Y} = (Y(t_1),\dots, Y(t_{n_*})^\top$, i.e.\, $\bm{Y} = \mathbf{E}\mathbf{Z}$, and $\bm{\epsilon} = (\varepsilon(t_1),\dots, \varepsilon(t_{n_*}))^\top$, the distribution of $\tilde{\mathbf{Z}} = \mathbf{Q}^\dagger \mathbf{Z} = (\mathbf{M}^\dagger \mathbf{Z}, \mathbf{N}^\dagger \mathbf{Z})^\dagger = (\tilde{\mathbf{Z}}_0^\dagger,  \tilde{\mathbf{Z}}_+^\dagger)^\dagger$ conditional on $\bm{Y} + \bm{\epsilon} = \mathbf{y}$ has a density proportional to
		\begin{align*}
			\nonumber
			p_{\tilde{\mathbf{Z}}}(\tilde{\mathbf{z}} &\mid \bm{Y} + \bm{\epsilon} = \mathbf{y} ) 
			\propto 
			p_{\tilde{\mathbf{Z}}, \bm{Y} + \bm{\epsilon}}(\tilde{\mathbf{z}}, \bm{Y} + \bm{\epsilon} ) 
			\propto
			p_{\mathbf{Z}, \bm{\epsilon}}(\overbrace{\mathbf{Q}\,\tilde{\mathbf{z}}}^{=\mathbf{M}\tilde{\mathbf{z}}_0 + \mathbf{N} \tilde{\mathbf{z}}_+}, \mathbf{y}-\mathbf{E}\mathbf{Q}\,\tilde{\mathbf{z}}) \\
			\nonumber
			&\propto
			\exp\left(- \frac{1}{2} \tilde{\mathbf{z}}^\dagger \mathbf{Q}^\dagger \mathbf{\Lambda}^{-1} \mathbf{Q} \tilde{\mathbf{z}} \right) \cdot\\ 
			\nonumber
			&\quad \cdot \exp\left(- \frac{1}{2} (\mathbf{y}_+-\mathbf{E}_+\mathbf{Q}\,\tilde{\mathbf{z}})^\dagger \mathbf{T}_+^{-1} (\mathbf{y}_+-\mathbf{E}_+\mathbf{Q}\,\tilde{\mathbf{z}}) \right)\, \ind_{\{\mathbf{y}_0\}}(\mathbf{E}_0\mathbf{Q}\,\tilde{\mathbf{z}})\\
			&\overset{(*)}{\propto}
			\exp\left(- \frac{1}{2} \left( \tilde{\mathbf{z}}_+^\dagger \mathbf{N}^\dagger \mathbf{\Lambda}^{-1} \mathbf{N}\,\tilde{\mathbf{z}}_+ 
			\right) - 
			\Re\left( \tilde{\mathbf{z}}_+^\dagger \mathbf{N}^\dagger \mathbf{\Lambda}^{-1} \mathbf{z}_0\right) \right) \cdot \, \\
			\nonumber
			&\quad \cdot \exp\left(- \frac{1}{2} \tilde{\mathbf{z}}_+^\dagger \mathbf{N}^\dagger\mathbf{E}_+^\dagger \mathbf{T}_+^{-1} \mathbf{E}_+ \mathbf{N}\,\tilde{\mathbf{z}}_+ +
			\Re\left( \tilde{\mathbf{z}}_+^\dagger \mathbf{N}^\dagger \mathbf{E}_+^\dagger \mathbf{T}_+^{-1} (\mathbf{y}_+ -  \mathbf{E}_+ \mathbf{z}_0) \right) 
			\right)\, \ind_{\{\mathbf{M}^\dagger\mathbf{z}_0\}}(\tilde{\mathbf{z}}_0) \\ 
			\nonumber
			&\propto 
			\exp\Bigg(- \frac{1}{2} \tilde{\mathbf{z}}_+^\dagger \underbrace{\,\mathbf{N}^\dagger\left(\mathbf{\Lambda}^{-1} + \mathbf{E}_+^\dagger \mathbf{T}^{-1}_+ \mathbf{E}_+\right)\mathbf{N}\,}_{=\, \mathbf{S}^{-1}}  \tilde{\mathbf{z}}_+ + \\
			\nonumber
			&\quad + 
			\Re\bigg( \tilde{\mathbf{z}}_+^\dagger \underbrace{\mathbf{N}^\dagger \Big(\mathbf{E}_+^\dagger \mathbf{T}_+^{-1} (\mathbf{y}_+ -  \mathbf{E}_+ \mathbf{z}_0) - \mathbf{\Lambda}^{-1} \mathbf{z}_0 \Big)}_{=\mathbf{S}^{-1} \hat{\mathbf{z}}_+}  \bigg)  \Bigg)  \, \ind_{\{\mathbf{M}^\dagger\mathbf{z}_0\}}(\tilde{\mathbf{z}}_0) \\ 
			\nonumber
			&\propto \exp\left( -\frac{1}{2} \left(\tilde{\mathbf{z}}_+ - \hat{\mathbf{z}}_+ \right)^\dagger \mathbf{S}^{-1} \left(\tilde{\mathbf{z}}_+ - \hat{\mathbf{z}}_+ \right) \right)  \, \ind_{\{\mathbf{M}^\dagger\mathbf{z}_0\}}(\tilde{\mathbf{z}}_0).
		\end{align*}
		Solving $\mathbf{y}_0 = \mathbf{E}_0 \mathbf{Q} \tilde{\mathbf{z}} = \mathbf{E}_0 \mathbf{M} \tilde{\mathbf{z}}_0$ for $\tilde{\mathbf{z}}_0$ yields $(*)$ and shows i). Deriving the kernel of a Gaussian, the remainder of the computation shows ii). 
		In iii), (\ref{eq:expectedCoef}) and (\ref{eq:expectedScore}) follow directly by linearity and (\ref{eq:expectedNorm}) from variance decomposition (omitting conditions for brevity):
		\begin{align*}
			\mathbb{E}\left( \| Y \|^2 \right) &= 
			\mathbb{E}\left( \langle \sum_{k=1}^{m} Z_k e_k, \sum_{k=1}^{m} Z_k e_k \rangle \right) =
			\mathbb{E}\left( \mathbf{Z}^\dagger \mathbf{G} \mathbf{Z} \right) = 
			\mathbb{E}\left( \operatorname{tr}\left(\mathbf{Z} \mathbf{Z}^\dagger \mathbf{G} \right) \right)\\ 
			&\overset{\text{linearity}}{=} 
			\operatorname{tr}\left( \mathbb{E}\left( \mathbf{Z} \mathbf{Z}^\dagger \right)  \mathbf{G} \right)
			= \operatorname{tr}\left( \left( \operatorname{Var}\left( \mathbf{Z} \right) + \mathbb{E}\left(\mathbf{Z}\right) \mathbb{E}\left(\mathbf{Z}\right)^\dagger \right) \mathbf{G} \right) \overset{\text{ii)}}{=} 
			\operatorname{tr}\left( \mathbf{S}\, \mathbf{G} \right) + \hat{\mathbf{z}}^\dagger \mathbf{G} \hat{\mathbf{z}}. \nonumber 
		\end{align*} 
		The computation for (\ref{eq:expectedScoreSquare}) is analogous.
	\end{proof}
	
	\textbf{Warping alignment:}
	Generally, we consider it advisable to base warping alignment of the $i$th curve directly on its original \SRV-evaluations $q_{i1}^{[h]},\dots, q_{in_i}^{[h]}$ but, when considerable measurement error presents an issue, it might also be useful to employ a smoothed reconstruction $\tilde{q}_i: [0,1] \rightarrow \mathbb{C}$ of the \SRV-transform in the assumed basis. 
	Based on the working normality assumption used also for length and rotation estimation, such a reconstruction is obtained as  $\tilde{q}_i^{[h]}(t) =
	({\hat{\mathbf{z}}^{[h]}_{i}/{\|\hat{\mathbf{z}}_{i}\|})^\top \hat{\mathbf{e}}^{[h]}(t)}$ with $\hat{\mathbf{z}}^{[h]}_{i} = (\hat{z}_{i1}^{[h]}, \dots, \hat{z}_{im}^{[h]})^\top$ the predicted score vector for the eigenbasis $\hat{\mathbf{e}}^{[h]} = (\hat{e}^{[h]}_1, \dots, \hat{e}^{[h]}_m)^\top$.
	
	Following \citet{steyer2021elastic}, warping alignment to $\hat{\mu}^{[h]}$ is conducted using another, polygonal approximation of the curve given by a piece-wise constant approximation $\hat{q}_i^{[h]} \in \mathbb{L}^2([0,1], \mathbb{C})$ of $q_i^{[h]}$. 
	With a hyper-parameter $\rho \in [0,1]$, we control the balance between original $q_{ij}^{[h]}$ (for $\rho = 0$) and smoothed reconstruction $\tilde{q}_i$ (for $\rho = 1$) and set $\hat{q}_{ij}^{[h]} = \hat{u}_i^{[h]} \left(\varrho\,\tilde{q}_i^{[h]}(t_{ij}^{[h]})  + (1-\varrho)\,  q_{ij}^{[h]}\right)$ at nodes $s_{i0}^{[h]}=0$, $s_{ij}^{[h]} = 2 t^{[h]}_{ij} - s^{[h]}_{ij-1}$,  $j=1, \dots, n_i$. 
	This defines $\hat{q}_i^{[h]}(t) = \sum_{j=1}^{n_i} \hat{q}_{ij}^{[h]} \, \ind_{[s_{ij-1}^{[h]}, s_{ij}^{[h]})}(t)$ already rotated by $\hat{u}_i^{[h]}$.
	
	Warping alignment to $\hat{\mu}^{[h]}$ is achieved for $i=1,\dots, n$ by finding an optimal $\hat{q}_i^* \in \mathbb{L}^2([0,1], \mathbb{C})$ with 
	\begin{equation}
		\label{eq:optimalwarping1}
		\|\hat{q}_i^* - \hat{\psi}^{[h]}\| \leq \|\hat{q}_i^{[h]} \circ\gamma\, {\dot{\gamma}^{1/2}} - \hat{\psi}^{[h]}\| \quad \text{for all }\gamma\in\Gamma
	\end{equation}
	where the polygon approximation yields a practically feasible optimization problem and has proven suitable for sparse/irregular curves \citep{steyer2021elastic}. 
	As shown by \citet{steyer2021elastic}, the optimizers of (\ref{eq:optimalwarping1}) have the form $\hat{q}_i^*(t) = \sum_{j=1}^{n_i} w_{i}(t)\, \hat{q}_{ij}^{[h]} \ind_{[s^{[h+1]}_{ij-1}, s^{[h+1]}_{ij})}(t)$ almost-everywhere, where, denoting $a_+ = \max\{a, 0\}$ for $a\in\mathbb{R}$, the functions $w_i: [0,1] \rightarrow \mathbb{R}$ are given by $w_i^2(t) = {(s^{[h]}_{ij} - s^{[h]}_{ij-1})\Re\left( \psi^{[h]}(t)^\dagger \hat{q}_{ij}^{[h]} \right)^2_+}/{\int_{s^{[h+1]}_{ij-1}}^{s^{[h+1]}_{ij}} \Re\left( \psi^{[h]}(t)^\dagger \hat{q}_{ij}^{[h]} \right)^2_+\,dt}$ for $t \in [s^{[h+1]}_{ij-1}, s^{[h+1]}_{ij})$, and  fully determined by the warped time points 
	\begin{equation}
		(s_{i1}^{[h+1]}, \dots, s_{in_i-1}^{[h+1]}) = \underset{0 = s_{i0} \leq \dots \leq s_{in_i} = 1}{\arg\max} \sum_{j=1}^{n_i} \big({(s_{ij}^{[h]} - s_{ij-1}^{[h]}) \int_{s_{ij-1}}^{s_{ij}} \Re\left( \psi^{[h]}(t)^\dagger \hat{q}_{ij}^{[h]} \right)^2_+ \, dt } \big)^{1/2}. \nonumber
	\end{equation}
	If $s_{ij}^{[h+1]} = s_{ij-1}^{[h+1]}$ for some $j$, there is a minimizing sequence of functions of the form given for $\hat{q}^*_i$.  
	After optimization over the $s_{ij}^{[h]}$ with R package \texttt{elasdics} \citep{elasdics},
	we set new $t_{ij}^{[h+1]} = {(s^{[h+1]}_{ij-1} + s^{[h+1]}_{ij})}/{2}$ and $q_{ij}^{[h+1]} = w^*_j\, q_{ij}^{[h]}$ with $w^*_{ij} = (s_{ij}^{[h]} - s_{ij-1}^{[h]})^{1/2}\,(s^{[h+1]}_{ij} - s^{[h+1]}_{ij-1})^{-1/2}$ for $s^{[h+1]}_{ij} > s^{[h+1]}_{ij-1}$ and omit double time points for $j=1, \dots, n_i$. 
	The chosen time-points hereby approximate $t_{ij}^{[h+1]} \approx t_{ij}^*\in (s^{[h+1]}_{ij}, s^{[h+1]}_{ij-1})$ with $w_i(t_{ij}^*) = w^*_{ij}$ existing by the Mean Value Theorem.


\section{Adequacy and robustness of elastic full Procrustes mean estimation in realistic curve shape data}

While we focus on the first letter ``\textit{f}'' in our simulation studies, Figure \ref{fig:fda} exemplifies elastic full Procrustes mean estimation on the entire ``\textit{fda}'' handwritings contained in the dataset \texttt{handwrit.dat} in the R package \texttt{fda} \citep{RamsaySilverman2005}. To visualize different degrees of sparsity, means are fitted after subsampling recorded points to $n_i = n_{points}$, $i=1,\dots,n$, $n=20$, random sampling points for each curve placing higher acceptance probability on points more important for curve reconstruction, as illustrated in the bottom of the figure.
Means are fitted using piece-wise constant 0 order B-splines with 70 knots applying a 2nd order difference penalty in the Hermitian covariance estimation. This results in a nice gradual evolution from a rough ``\textit{fda}'' approximation for $n_{points} = 21$ to a detailed handwritten ``\textit{fda}'' for $n_{points} = 71$.

\begin{figure}
    \input{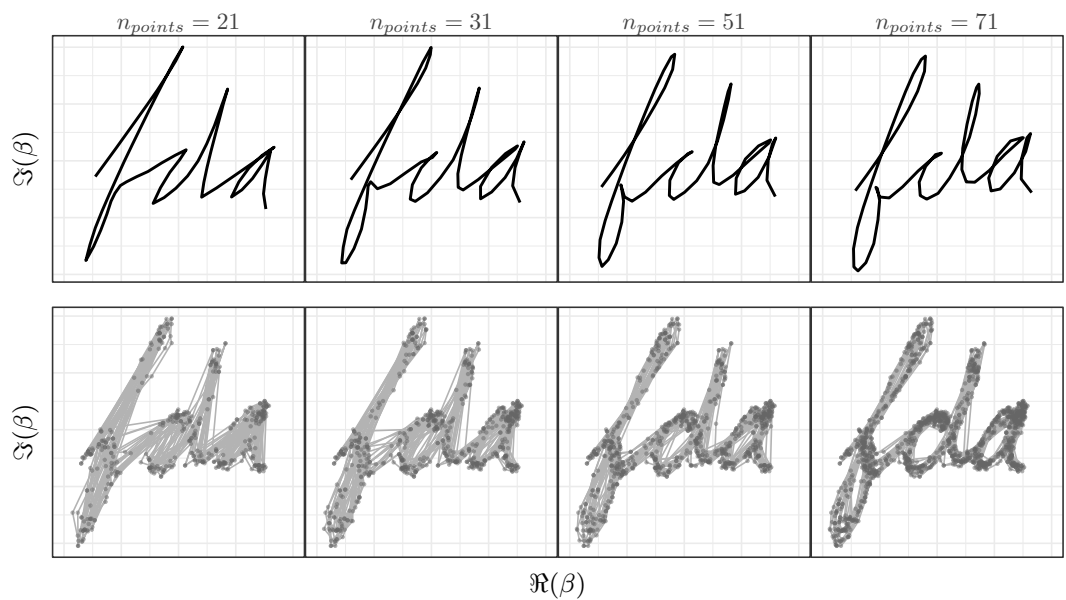}
	\caption{
		\textit{Top:} Elastic full Procrustes means estimated over 20 handwritten ``\textit{fda}''s  sampled with different degrees of sparsity. 
		\textit{Bottom:} Underlying datasets with 20 curves from the \texttt{handwrit.dat} dataset subsampled with higher acceptance probability on points important for curve reconstruction. Points sampled for each curve are connected by light-grey lines. 
		\label{fig:fda}
	}
\end{figure}
%
%
%

\label{lastpage}

\end{document}